\newtheorem{theorem}{Theorem}
\newtheorem{lemma}{Lemma}
\newtheorem{definition}{Definition}
\newtheorem{example}{Example}
\newcommand{\bc}[1]{{\textcolor{black}{#1}}}
\begin{document}
\title{Low Overhead Universal Quantum Computation with Triorthogonal Codes}


\author[1]{Dawei Jiao}
\email{dawei.jiao@aalto.fi}
\author[1]{Mahdi Bayanifar}
\author[2]{Alexei Ashikhmin}
\author[1]{Olav Tirkkonen}

\affiliation[1]{Department of Communications and Networking, Aalto University, Finland}
\affiliation[2]{Nokia Bell Labs, Murray Hill, New Jersey, USA}

\begin{abstract}
We study the use of triorthogonal codes for universal fault-tolerant quantum computation and propose two methods to circumvent the Eastin-Knill theorem, which prohibits any single quantum error-correcting code
from supporting both universality and a transversal gate set.
We show that our methods reduce the resource overhead compared with existing fault-tolerant protocols.
We develop a simple fault-tolerant implementation of the logical Hadamard gate for triorthogonal codes by exploiting the fact that they have transversal controlled-Z (CZ) gates, resulting in a circuit with reduced overhead.
We also introduce a procedure for generating a symmetric Calderbank-Shor-Steane code paired with a triorthogonal code, which allows CNOT and CZ gate transversality across the pair of codes. 
In addition, we present logical state teleportation circuits that transfer encoded states between the two codes, allowing all logical operations to be performed transversally. 
Our methods can be integrated into the Steane error correction framework without incurring additional resource cost.
Finally, using the 15-qubit code as an example, we demonstrate that our protocols significantly reduce the gate overhead compared with other existing methods. 
These results highlight the potential of combining distinct code structures to achieve low-overhead, universal fault-tolerant quantum computation.
\end{abstract}


\maketitle

\section{Introduction}
Quantum error correcting codes (QECCs) are crucial for suppressing physical errors in large quantum systems, through encoding information into a protected subspace of a larger Hilbert space~\cite{nielsen2002Book,chuang1995quantumerrorcorrectioncoding,laflamme1996perfectquantumerrorcorrection}.
To ensure reliable computation, quantum logical operations must be implemented fault-tolerantly, so that a single physical error does not spread uncontrollably and become unrecoverable~\cite{Gottesman_1998,preskill1997faulttolerantquantumcomputation}. However, fault-tolerance alone is not sufficient, universal quantum computation also requires a gate set capable of implementing arbitrary unitary operations. Transversal gates are highly desirable due to their inherent error-localization properties, but as shown by the Eastin-Knill theorem~\cite{eastin2009restrictions}, no QECC supports a universal gate set composed entirely of transversal operations.


To circumvent this limitation, several strategies have been proposed. The standard approach uses magic states~\cite{bravyi2005magicstate,li2015magic}, where non-Clifford gates, such as the $\mathbf{T}$-gate, are implemented by gate teleportation using specially prepared ancillary states. When combined with codes supporting transversal Clifford gates this enables universal fault-tolerant computation. However, preparing high-fidelity magic states through distillation is resource costly, often requiring orders of magnitude more qubits and gates than other logical operations~\cite{fowler2013surface}.

An alternative strategy focuses on codes that admit transversal non-Clifford gates, such as triorthogonal codes with transversal $\mathbf{T}$ and controlled-controlled-Z (CCZ) gates~\cite{bravyi2005magicstate,narayana2020optimality}, 
while using additional techniques to implement non-transversal Clifford gates fault-tolerantly~\cite{paetznick2013universal,anderson2014codeswitch,yoder2017universal,banfield2022rewiring}. 
This direction is promising, and has a lower qubit resource cost, as Clifford gates are generally easier to realize and correct in hardware.

In this paper, we follow the latter approach.  
We propose two methods to achieve fault-tolerant universal computation using triorthogonal codes. 
The first one is based on a novel fault-tolerant logical Hadamard gate protocol
for triorthogonal codes. 
The proposed protocol requires less resources and does not change the code space compared to
~\cite{paetznick2013universal,yoder2017universal}. 
In particular, our protocol does not use CCZ gates, and therefore it requires smaller overhead compared to~\cite{yoder2017universal}. 
The protocol also reduces the number of single-qubit physical operations compared with~\cite{paetznick2013universal}.
Our second method is base on a novel protocol for teleporting logical states between a triorthogonal code and a symmetric Calderbank-Shor-Steane (CSS) code derived from it, using only transversal operations.
This enables all Clifford and non-Clifford gates to be performed in codes where they are natively transversal. 
While working on this manuscript we discovered that in some sense similar approach was proposed recently in~\cite{Heu_en_2025TransversalCodeSwithcing}.
However, our protocol is not identical to the one from~\cite{Heu_en_2025TransversalCodeSwithcing}, and it is more general and not limited  by particular types of codes.
Which protocol is more efficient depends on the relative frequencies of the Hadamard and other Clifford gates in a given quantum algorithm.
It is also important to note that our protocol can be integrated into the Steane error correction framework~\cite{steane1997errorcorrection}, incurring no additional overhead in terms of qubits or gates compared to standard syndrome extraction.
Using the $[[15,1,3]]$ triorthogonal code~\cite{bravyi2005magicstate} as an example, we demonstrate the gate and qubit resource cost with and without state preparation. We compare the results with methods from~\cite{paetznick2013universal,Heu_en_2025TransversalCodeSwithcing,Butt_2024FTcodeswitching}. The results show that our method reduces the gate overhead significantly.

The rest of the paper is organized as follows: Section~\ref{Sec.Pre} introduces preliminaries on CSS codes, transversal gates, triorthogonal codes, and the Steane error correction method. Section~\ref{Sec.CZHadamard} reviews relevant prior work and presents our optimized Hadamard gate circuit. In Section~\ref{Sec.CodeGeneration}, we describe how to construct symmetric CSS codes from triorthogonal codes and examine their transversal gate sets, using an $[[15,1,3]]$ code as an example. Section~\ref{Sec.CircuitStructure} introduces our teleportation circuits for transferring logical states between the two codes. Section~\ref{sec.Steaneframe} demonstrates how our approach can be merged within the Steane error correction procedure without additional resources. Finally, in Section~\ref{Sec.ResourceEST}, we use the 15-qubit code as an example to demonstrate that our protocol reduces gate overhead as  compared with other deterministic methods.

\section{Preliminaries}
\label{Sec.Pre}
\subsection{CSS Codes}

CSS quantum error correction codes 
can be constructed based on two classical binary linear codes 
$\mathcal{C}_1\left[n, k_1, d_1\right]$ and $\mathcal{C}_2\left[n, k_2, d_2\right]$, such that $\mathcal{C}_2^{\perp} \subset \mathcal{C}_1$, with 
$\mathcal{C}_2^{\perp}$ the dual space of $\mathcal{C}_2$.
A quantum $[[n, k, d]]$ CSS code $\mathcal{Q} = \rm{CSS}\left(\mathcal{C}_1, \mathcal{C}_2\right)$ is defined as a
$2^k$-dimensional linear subspace of $\mathbb{C}^{2^n}$ with orthonormal basis~\cite{nielsen2002Book}
\begin{equation}
	\lvert \bm{\psi} \rangle_L = \frac{1}{\sqrt{\left\lvert \mathcal{C}_2^{\perp}\right\rvert} }\sum_{\mathbf{y} \in \mathcal{C}_2^{\perp}}{| \mathbf{x}_\psi+\mathbf{y} \rangle}, \label{Eq:StablStCSS}
\end{equation}
where $\bm{\psi} \in \mathbb{F}_2^k$ and $\mathbf{x}_{\bm{\psi}} = \bm{\psi} \mathbf{A}$, where $\mathbf{A}\in \mathbb{F}_2^{k\times n}$ is a full-rank mapping matrix which is a generator of the quotient group $\mathcal{C}_1 / \mathcal{C}_2^{\perp}$, i.e., $\mathbf{A}\cong\mathcal{C}_1 / \mathcal{C}_2^{\perp}$.
Note that $k=k_1+k_2-n$ and the code minimum distance is  $d = \min\left(d_1', d_2' \right)$, where $d_1'=\min\{\omega_H(\mathbf{c})\lvert\mathbf{c}\in\mathcal{C}_1/\mathcal{C}_2^\perp\}$ and $d_2'=\min\{\omega_H(\mathbf{c})\lvert\mathbf{c}\in\mathcal{C}_2/\mathcal{C}_1^\perp\}$.


Pauli matrices for a single qubit system are defined as 
\begin{equation}
    \mathbf{X} \triangleq
    \begin{bmatrix}
        0 & 1 \\
        1 & 0
    \end{bmatrix},
    \;
    \mathbf{Z} \triangleq
    \begin{bmatrix}
        1 & 0 \\
        0 & -1
    \end{bmatrix}
    \;
    \mathbf{Y} \triangleq
    \begin{bmatrix}
        0 & -i \\
        i & 0
    \end{bmatrix},
\end{equation}
where $i\triangleq \sqrt{-1}$. Pauli 
errors acting on $n$ qubits have the form
$\mathbf{E}\left( \mathbf{a, b} \right) = i^{\mathbf{a b}^T} \mathbf{D}\left( \mathbf{a, b} \right)$, where $\mathbf{D}\left( \mathbf{a, b} \right) = \mathbf{X}^{a_1}\mathbf{Z}^{b_1} \otimes ... \otimes \mathbf{X}^{a_n}\mathbf{Z}^{b_n}$ and  $\mathbf{a} = \left[a_1, ..., a_n \right]$, $\mathbf{b} = \left[b_1, ..., b_n \right]$ are binary vectors. 
We denote by
$\gamma$ the homomorphism defined by 
$\gamma\left( i^{\kappa} \mathbf{D} \left( \mathbf{a, b} \right) \right) = \left[\mathbf{a}, \mathbf{b}\right]$, for $\kappa\in \left\{0,1,2,3 \right\}$. 

A ${\rm CSS}(\mathcal{C}_1,\mathcal{C}_2)$  
code, defined above, is an $[[n,k]]$ stabilizer code~\cite{nielsen2002Book} whose $n-k$ generators correspond to binary vectors $\mathbf{g}_i$, $i=1,...,n-k$ in the form $[\mathbf{a},\mathbf{0}]$ or $[\mathbf{0},\mathbf{b}]$, where 
$\mathbf{0}\in \mathbb{F}_2^n$ is the all zero vector. Combining vectors $\mathbf{a}$ and $\mathbf{b}$, we obtain parity check matrices $\mathbf{H}\left(\mathcal{C}_2 \right)$ and $\mathbf{H}\left(\mathcal{C}_1 \right)$ of codes $\mathcal{C}_2$ and $\mathcal{C}_1$ respectively. Putting these matrices together, we obtain the matrix $\mathbf{G}^{\mathcal{Q}}$, the matrix of stabilizer generators of $\rm CSS(\mathcal{C}_1,\mathcal{C}_2)$:

\begin{equation}
\mathbf{G}^{\mathcal{Q}} = 
\left[
\begin{array}{cc}
\mathbf{H}\left(\mathcal{C}_2 \right) & \mathbf{0} \\  
\hline
\mathbf{0} & \mathbf{H}\left(\mathcal{C}_1 \right)
\end{array}
\right].
\end{equation}
The binary matrix $\mathbf{G}^{\mathcal{Q}}$ has dimension $\left( n-k \right)\times2n$, which implies that $\mathcal{Q}$ encodes $k$ logical qubits into $n$ physical qubits. 
A CSS code is called {\em symmetric} if $\mathcal{C}_1 = \mathcal{C}_2$.

\subsection{Transversality}

Let $\mathbf{U}$ be a $2^m\times 2^m$ operator acting on $m$ qubits. The following definition of a {\em U-transversal code} is widely used in the literature.
Let $\mathbf{\hat {U}}_i$ be the $2^{mn}\times 2^{mn}$ operator that applies $\mathbf{U}$ to qubits with indices 
 $\{i+ln: 0\le l\le m-1\}$ among $mn$ qubits. Assume that we have an $[[n,k]]$ stabilizer codes $\mathcal{Q}$ and $\mathbf{U}_{\rm Enc}$ is an encoding operator of $\mathcal{Q}$. $\mathcal{Q}$ is $\mathbf{U}$-transversal if:
\begin{equation}
\left(\prod_{i=1}^n \mathbf{\hat {U}}_i\right)   \bigotimes_{j=1}^{m} \left(\mathbf{U}_{\rm Enc} |\bm{\psi}_j\rangle |0\ldots 0\rangle\right) =
  \left(\bigotimes_{s=1}^{m}  \mathbf{U}_{\rm Enc}\right)\left(\prod_{i=1}^{k}  \mathbf{\hat {U}}_i\right)  \bigotimes_{j=1}^{m} \left( |\bm{\psi}_j\rangle |0\ldots 0\rangle\right),
  \label{Eq.TrDef}
\end{equation}
where $|\bm{\psi}_j\rangle \in {\mathbb C}^{2^k}$ are logical states before encoding, and the $|0\ldots 0\rangle$ state corresponds to the $n- k$ physical qubits initialized in the $\lvert0\rangle$ state for the encoding process.

\bc{Different stabilizer codes have different sets of transversal gates,} e.g., any CSS codes is controlled-NOT (CNOT) gate transversal. Furthermore, any symmetric CSS code has transversality for all Clifford gates.
For non-Clifford gate,  a very useful non-Clifford gate is the $\mathbf{T}$-gate:
$$\mathbf{T}=\left[\begin{array}{cc}
     1& 0 \\
     0& e^{i\pi/4}
\end{array}\right].$$ 
Certain non-symmetric CSS codes belonging to the family of triorthogonal codes are T-transversal~\cite{bravyi2005magicstate}.

In~\cite{bayanifar2025transversality}, we proposed the following generalization of the standard definition~\eqref{Eq.TrDef} of 
transversality within the same code.  Assume that we have $m$ $[[n,k]]$ stabilizer codes $\mathcal{Q}_j,\ j=1,\ldots,m$, and $\mathbf{U}_{\rm Enc,j}$ is an encoding operator of $\mathcal{Q}_j$. Codes $\mathcal{Q}_1,\ldots,\mathcal{Q}_m$ are $\mathbf{U}$-transversal if:
\begin{equation}
\left(\prod_{i=1}^n \mathbf{\hat {U}}_i\right)   \bigotimes_{j=1}^{m} \left(\mathbf{U}_{\rm Enc,j} |\bm{\psi}_j\rangle |0\ldots 0\rangle\right) =
  \left(\bigotimes_{s=1}^{m}  \mathbf{U}_{\rm Enc,s}\right)\left(\prod_{i=1}^{k}  \mathbf{\hat {U}}_i\right)  \bigotimes_{j=1}^{m} \left( |\bm{\psi}_j\rangle |0\ldots 0\rangle\right).
\label{Eq.TrDefm}
\end{equation}
In other words, the standard definition~\eqref{Eq.TrDef} assumes $\mathbf{U}$ operation between $m$ code states of the same code $\mathcal{Q}$, and~\eqref{Eq.TrDefm} assumes $\mathbf{U}$ operation between $m$ code states of $m$ different codes, $\mathcal{Q}_1,\ldots,\mathcal{Q}_m$. We show in the following sections that this opens up a number of new opportunities and can be very useful for fault-tolerant computations.

 In \cite{bayanifar2025transversality} we considered the case of $m=2$ in details and formulated  the necessary and sufficient conditions for code pairs to be CNOT and CZ transversal. 
The conditions for realizing a transversal logical CNOT from
$\rm{CSS} (\mathcal{C}_1,\mathcal{C}_2)$ to  $\rm{CSS}(\mathcal{C}_3,\mathcal{C}_4)$ are:
\begin{equation}
    \mathcal{C}_1 / \mathcal{C}_2^{\perp} \cong \mathcal{C}_3 / \mathcal{C}_4^{\perp}, \quad  \mathcal{C}_2^\perp \subseteq \mathcal{C}_4^\perp.
    \label{Eq:CNOTCondition}
\end{equation}
Similarly, we have CZ gate transversality between $\rm{CSS} (\mathcal{C}_1,\mathcal{C}_2)$ and  $\rm{CSS}(\mathcal{C}_3,\mathcal{C}_4)$ iff:
\begin{align}
        \mathbf{x}^A \mathbf{z}^T + \mathbf{y}\left( \mathbf{x}^B + \mathbf{z}\right)^T  = 0 \qquad \forall \, &\mathbf{x}^A\in\mathcal{C}_1/\mathcal{C}_2^\perp,\,\mathbf{x}^B\in\mathcal{C}_3/\mathcal{C}_4^\perp,\,\mathbf{y}\in \mathcal{C}_2^{\perp}, \, \mathbf{z}\in \mathcal{C}_4^{\perp} \\
    &\mathbf{A B}^T = \mathbf{I},
    \end{align}
    where 
$\mathbf{A}, \mathbf{B} \in \mathbb{F}_2^{k \times n}$,
$\mathbf{A}\cong\mathcal{C}_1 / \mathcal{C}_2^{\perp}$, $\mathbf{B}\cong\mathcal{C}_3 / \mathcal{C}_4^{\perp}$ are two mapping matrices, of the kind discussed after~\eqref{Eq:StablStCSS}. 
Often, it is more convenient to use the following sufficient conditions:
\begin{equation} \label{Eq:CZsuffcond1}
        \mathcal{C}_1 / \mathcal{C}_2^{\perp} \subset \mathcal{C}_4, \quad  \mathcal{C}_3 \subseteq \mathcal{C}_2, \quad \mathbf{A B}^T = \mathbf{I}.
    \end{equation}

\subsection{Triorthogonal Code}

Triorthogonal codes are a special type of CSS codes, which are generated by a  triorthogonal matrix. 
   \bc{ A matrix $\mathbf{G}=[G_{ij}] \in \mathbb{F}_2^{m \times n}$ is called {\em triorthogonal} if 
    \begin{equation}
            \sum_{i=1}^n G_{ai}G_{bi}=0\!\!\!\!\mod 2 \text{, for any $a \not = b$}\,,
            \label{Eq:Cond1}
    \end{equation}   
    \begin{equation}
            \sum_{i=1}^n G_{ai}G_{bi}G_{ci}=0\!\!\!\!\mod 2\text{, for any distinct $a,b,c$\,}
    \end{equation}}
see~\cite{shi2024triorthogonal}.
By row permutation, we get 
 \begin{equation}
     \mathbf{G}=\left [
\begin{array}{c}
\mathbf{G}_1 \\
\hline
\mathbf{G}_0
\end{array}
\right],
 \label{Eq:TriMatrix}
 \end{equation}
where $\mathbf{G}_1$ and $\mathbf{G}_0$ formed by all odd and even weight rows of $\mathbf{G}$, respectively. 

\begin{definition} \label{Def:triOrthoCodes}
\bc{A triorthogonal matrix $\mathbf{G}$ defines an $[[n,k,d]]$ triorthogonal ${\rm CSS}\left( \mathcal{C}_1,  \mathcal{C}_2\right)$ code with stabilizers}


\begin{equation}
    \mathbf{G}^{\mathcal{Q}^T} = 
    \left[
    \begin{array}{cc}
        \mathbf{G}_0 & \mathbf{0} \\
        \hline
        \mathbf{0} & \mathbf{G}^{\perp} 
    \end{array}
    \right]\,,
\end{equation}
\bc{where $k$ is the  number of rows of $\mathbf{G}_1$, $\mathbf{G}$ and $\mathbf{G}_0^\perp$ are generator matrices of $\mathcal{C}_1$ and $\mathcal{C}_2$ respectively. For a matrix $\mathbf{G}$ we denote by $\mathbf{G}^\perp$ a generator matrix of the dual space of $\mathbf{G}$.}
\end{definition}

\bc{It is shown in~\cite{narayana2020optimality} that if a triorthogonal CSS code $\mathcal{Q}^T$ is Pauli $\mathbf{X}$-transversal then $\mathcal{Q}^T$ is also $\mathbf{T}$-transversal. We shall call such codes {\em T-triorthogonal}.}

\subsection{Steane Error Correction Method}

\bc{In what follows we shall use T-triorthogonal codes within the Steane error correction method~\cite{steane1997errorcorrection}, which can be briefly described as follows. }

For CSS code $\rm CSS (\mathcal{C}_1,\mathcal{C}_2)$, the Steane error correction procedure begins with preparing an ancilla block in the $\lvert+\rangle_L$ state. A transversal logical CNOT gate is applied from the data to the ancilla block, resulting in $\mathbf{CNOT}\lvert\psi\rangle_L\lvert+\rangle_L=\lvert\psi\rangle_L\lvert+\rangle_L$. Next, the physical ancilla qubits are measured in the Pauli $\mathbf{Z}$-basis. 

If there are no physical Pauli $\mathbf{X}$-errors on data qubits, the logical CNOT gate leaves the state unchanged, and the measurement  outcome is a random binary codeword from $\mathcal{C}_1$. 
Otherwise Pauli $\mathbf{X}$ errors will propagate to the ancilla qubits, resulting in a corrupted codeword.
The error syndrome reveals the error locations, and due to the correlation introduced by the CNOT, the recovery operation can be applied to the data block to restore the logical state.

\begin{figure}
\centering
\includegraphics[width=0.5\linewidth]{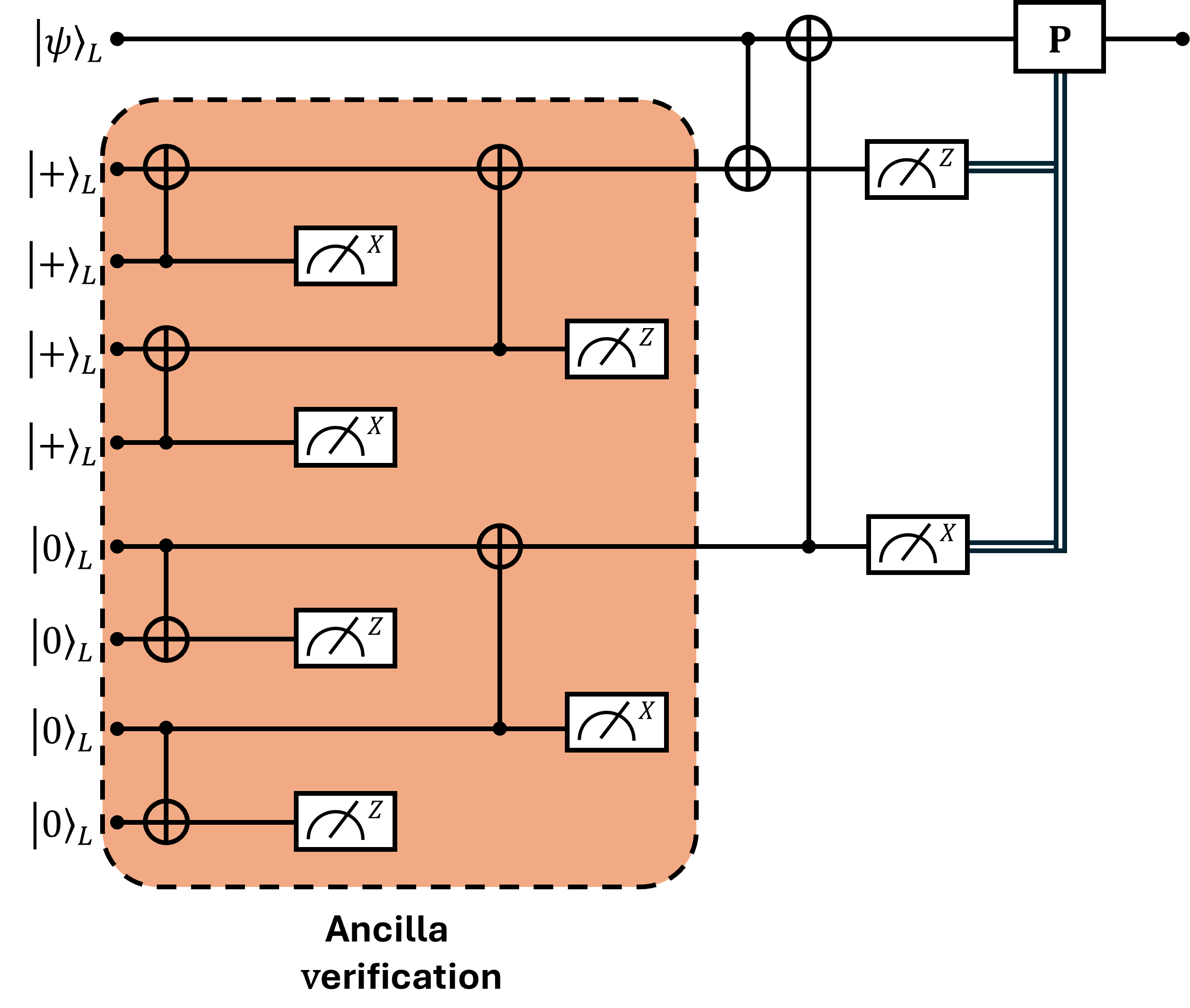}
    \caption{The Steane error correction syndrome extraction circuit. Two ancilla blocks (the second and sixth lines) encoded in logical $\lvert+\rangle_L$ and $\lvert0\rangle_L$ states to load the physical errors from data qubits. Error syndromes can be obtained by measuring ancillas. Additional six ancilla blocks are used to prevent high-weight physical errors on ancilla qubits from propagating to data qubits. 
    }
    \label{fig:SteaneEC}
\end{figure}


The procedure for Pauli $\mathbf{Z}$ error correction is similar. It uses an ancilla block initialized in the logical $\lvert0\rangle_L$ state and the transversal CNOT gate from ancilla to the data block.

Preparation of the logical $\lvert0\rangle_L$ and $\lvert+\rangle_L$ state may not be fault-tolerant, allowing low-weight errors to spread to data qubits via CNOT gates.
To prevent this, ancilla blocks are verified using extra qubits. As shown in Fig.~\ref{fig:SteaneEC}, implementing Steane error correction for an $[[n,k,d]]$ CSS code requires eight ancilla blocks per correction round~\cite{chamberland2018deep}. If non-trivial syndromes are detected, the ancilla blocks are discarded and re-prepared.
Ancilla overhead can be reduced by using alternative fault-tolerant strategies, such as flag qubits or stabilizer measurements~\cite{Chamberland_2018flag,goto2016minimizing}.

\section{Logical Hadamard Gate by CZ Gate}
\label{Sec.CZHadamard}

In this section, we propose an improved technique for the logical Hadamard gate for triorthogonal codes, compared to techniques proposed in~\cite{yoder2017universal,paetznick2013universal}. Our technique reduces the overhead while maintaining fault-tolerance.
The combination of our fault-tolerant Hadamard gate procedure and $\mathbf{T}$-gate transversality enables universal fault-tolerant quantum computation with T-triorthogonal codes.
This fault-tolerant logical Hadamard gate circuit can be merged into Steane error correction procedure without additional cost. More details can be found in Sec.~\ref{sec.Steaneframe}.

In~\cite{yoder2017universal}, the authors demonstrated that codes admitting transversal CCZ gates can implement a logical Hadamard gate $\mathbf{H}$ 
\bc{over a code state $\lvert\psi\rangle_L$. For this, one prepares additional code states $\lvert1\rangle_L$ and $\lvert+\rangle_L$, then acts on them as $\mathbf{CCZ}\lvert1\rangle_L\lvert+\rangle_L\lvert\psi\rangle_L$ transforming $\lvert+\rangle_L$ into $\mathbf{H}\lvert\psi\rangle_L$.}
This scheme enables a universal gate set $\{\rm \mathbf{CCZ},\,\mathbf{H}\}$. However, logical Hadamard fidelity is constrained by the CCZ gate, and the implementation requires three code blocks in total. 

An alternative approach was presented in~\cite{paetznick2013universal}. 
\bc{A triorthogonal code is assisted by a gauge code for achieving a transversal Hadamard gate.}
This method requires only one ancilla code block and transversal CNOT gates. Furthermore, when integrated with Steane error correction, the ancilla block can be merged into the syndrome extraction procedure. 
The disadvantage of this approach is that transversal Hadamard gates do not preserve the code space. Therefore, when it is followed by error correction, 6 out of 10 error syndrome bits are needed for mapping the state back to the code space,
which reduces the $X$-error correction capability from correct all weight-3 errors to weight-1 errors.


\bc{We now propose a technique for implementing the logical Hadamard gate transversally that requires a single ancilla block and is simpler than the previously discussed methods.}
Rather than employing a transversal CCZ gate for achieving a logical CZ, we 
reduce the overhead while preserving the code space by using an inherited CZ-transversality of triorthogonal codes, as detailed in the following theorem. 

\begin{theorem}
\label{Theor:CZtransver}
Any triorthogonal code $\mathcal{Q}^T$ 
is CZ transversal.
\end{theorem}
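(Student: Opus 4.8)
The plan is to reduce the statement to the sufficient conditions for CZ-transversality of a code pair, specialized to the case where both codes in the pair are the same triorthogonal code $\mathcal{Q}^T = \mathrm{CSS}(\mathcal{C}_1,\mathcal{C}_2)$. Recall from Definition~\ref{Def:triOrthoCodes} that $\mathcal{C}_1$ is generated by the full triorthogonal matrix $\mathbf{G} = \left[\begin{smallmatrix}\mathbf{G}_1\\ \mathbf{G}_0\end{smallmatrix}\right]$, while $\mathcal{C}_2^{\perp}$ is generated by $\mathbf{G}_0$ (the even-weight rows), so that $\mathcal{C}_1/\mathcal{C}_2^{\perp}$ is represented by the odd-weight rows $\mathbf{G}_1$; take the mapping matrix $\mathbf{A} = \mathbf{B} = \mathbf{G}_1$. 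I would verify the sufficient conditions~\eqref{Eq:CZsuffcond1}, which in the symmetric-pair case read $\mathcal{C}_1/\mathcal{C}_2^{\perp} \subset \mathcal{C}_2$, $\mathcal{C}_1 \subseteq \mathcal{C}_2$, and $\mathbf{G}_1\mathbf{G}_1^T = \mathbf{I}$. Actually the cleaner route is to go back to the exact CZ condition: show directly that
\begin{equation}
\mathbf{x}^A \mathbf{z}^T + \mathbf{y}\left(\mathbf{x}^B + \mathbf{z}\right)^T = 0
\end{equation}
for all $\mathbf{x}^A,\mathbf{x}^B$ in the row span of $\mathbf{G}_1$ and all $\mathbf{y},\mathbf{z}$ in the row span of $\mathbf{G}_0$, together with $\mathbf{G}_1\mathbf{G}_1^T = \mathbf{I}$.

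The key steps, in order: (i) expand each of $\mathbf{x}^A,\mathbf{x}^B,\mathbf{y},\mathbf{z}$ as $\mathbb{F}_2$-linear combinations of rows of $\mathbf{G}$, so that every inner product above becomes a sum of pairwise row-overlap quantities $\sum_i G_{ai}G_{bi}$; (ii) invoke the first triorthogonality relation~\eqref{Eq:Cond1} to kill all cross-terms $\sum_i G_{ai}G_{bi}$ with $a\neq b$; (iii) for the diagonal terms $\sum_i G_{ai}^2 = \sum_i G_{ai} = \omega_H(\text{row }a)\bmod 2$, use the parity split: rows of $\mathbf{G}_0$ have even weight (contributing $0$) and rows of $\mathbf{G}_1$ have odd weight (contributing $1$); (iv) observe that in $\mathbf{y}(\mathbf{x}^B+\mathbf{z})^T$ the $\mathbf{y}$ and $\mathbf{z}$ pieces live in $\mathbf{G}_0$ while $\mathbf{x}^B$ lives in $\mathbf{G}_1$, so after step (ii) every surviving diagonal term involves a $\mathbf{G}_0$ row and hence vanishes by evenness — this handles the second summand; (v) similarly $\mathbf{x}^A\mathbf{z}^T$ pairs a $\mathbf{G}_1$ row with a $\mathbf{G}_0$ row, all cross, hence zero; (vi) finally $\mathbf{G}_1\mathbf{G}_1^T$: off-diagonal entries are zero by~\eqref{Eq:Cond1}, diagonal entries are $\omega_H \bmod 2 = 1$ since $\mathbf{G}_1$-rows are odd-weight, giving $\mathbf{G}_1\mathbf{G}_1^T = \mathbf{I}$. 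I should also double-check the mild edge case where the representative of $\mathbf{x}^A$ or $\mathbf{x}^B$ needs a correction by an element of $\mathcal{C}_2^{\perp} = \langle\mathbf{G}_0\rangle$; since any such correction is again a $\mathbf{G}_0$ combination, it is absorbed by the same even-weight/cross-term arguments.

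I expect the only real subtlety — not a deep obstacle but the place to be careful — to be bookkeeping of the coset representatives and making sure the chosen mapping matrix $\mathbf{A}=\mathbf{B}=\mathbf{G}_1$ is legitimate (full rank, generating $\mathcal{C}_1/\mathcal{C}_2^{\perp}$) and that the definition of CZ-transversality in~\eqref{Eq.TrDefm} for $m=2$ with $\mathcal{Q}_1=\mathcal{Q}_2=\mathcal{Q}^T$ indeed collapses to the same-code statement "$\mathcal{Q}^T$ is CZ transversal." Everything else is a direct application of the two defining triorthogonality identities plus the even/odd weight decomposition~\eqref{Eq:TriMatrix}; no third-order relation is actually needed for CZ (it is the $\mathbf{T}$/CCZ-transversality that consumes the triple-overlap condition), so the argument is short once the reduction is set up.
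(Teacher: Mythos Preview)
Your proposal is correct. The paper takes a closely related but more packaged route: rather than verifying the exact bilinear CZ condition row-by-row as you do, it checks the sufficient conditions~\eqref{Eq:CZsuffcond1} by first establishing the single code containment $\mathcal{C}_1 \subset \mathcal{C}_2$ (i.e., every row of $\mathbf{G}$ is orthogonal to every row of $\mathbf{G}_0$, which is exactly your pairwise-orthogonality-plus-even-weight observation, just stated at the code level), and then reading off $\mathcal{C}_1/\mathcal{C}_2^\perp \subset \mathcal{C}_1 \subset \mathcal{C}_2 = \mathcal{C}_4$ and $\mathcal{C}_3 = \mathcal{C}_1 \subset \mathcal{C}_2$ immediately. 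Both arguments rest on the same two ingredients---the pairwise relation~\eqref{Eq:Cond1} and the even/odd weight split of~\eqref{Eq:TriMatrix}---and both finish $\mathbf{A}\mathbf{B}^T = \mathbf{G}_1\mathbf{G}_1^T = \mathbf{I}$ identically. Your direct verification has the virtue of making explicit that the third-order triorthogonality relation is never invoked (as you correctly note, CZ only needs the pairwise condition); the paper's version is terser because the containment $\mathcal{C}_1 \subset \mathcal{C}_2$ swallows all of your steps (ii)--(v) in one line and then the sufficient conditions do the rest.
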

\begin{proof}
    Consider using the same triorthogonal code within two code blocks, $\mathcal{Q}^T\otimes\mathcal{Q}^T=\rm CSS(\mathcal{C}_1,\mathcal{C}_2)\otimes\rm CSS(\mathcal{C}_3,\mathcal{C}_4)$, with $\mathcal{C}_1=\mathcal{C}_3$ and $\mathcal{C}_2=\mathcal{C}_4$ with generators $\mathbf{G}$ and $\mathbf{G}_0^{\perp}$, respectively. 
    Note that $\mathcal{C}_1^{\perp} \perp \mathcal{C}_2$, since $\mathbf{G}^{\perp} = \mathbf{G}_1^\perp \cap \mathbf{G}_0^{\perp}$ and $\mathbf{G}_0$ is a self-orthogonal matrix. Thus, we conclude that $\mathcal{C}_1 \subset \mathcal{C}_2$.
    Combined with $\mathcal{C}_1=\mathcal{C}_3$ and $\mathcal{C}_2=\mathcal{C}_4$, we can get $\mathcal{C}_1/\mathcal{C}_2^\perp\subset\mathcal{C}_3\subset\mathcal{C}_4$. Also, note that $\mathbf{A}=\mathbf{B}  = \mathbf{G}_1\cong \mathcal{C}_1 / \mathcal{C}_2^{\perp}$, thus $\mathbf{AA}^T = \mathbf{G}_1\mathbf{G}_1^T$. As $\mathbf{G}_1$ is part of a triorthogonal matrix with all odd rows, we have $\mathbf{G}_1\mathbf{G}_1^T=\mathbf{I}$.
    It follows that all the constraints in~\eqref{Eq:CZsuffcond1} are satisfied.
\end{proof}
The CZ-transversality removes the need to implement CZ through CCZ, allowing for a more efficient logical Hadamard gate implementation. The protocol is depicted in Fig.~\ref{fig:CZHadamard}, and is described as follows: 
\bc{\begin{itemize}
    \item An arbitrary logical code state $\lvert\psi\rangle_L$ is used as input.
    \item An ancilla block is prepared in $\lvert+\rangle_L$ state.
    \item Transversal CZ gates are applied to $n$ pairs of physical qubits between $\lvert\psi\rangle_L$ and $\lvert+\rangle_L$.
    \item A logical $\mathbf{X}$-base measurement is performed on the data block.
    \item Based on the measurement outcome, a logical Pauli $\mathbf{X}$ operation is applied to the ancilla block. The final state in ancilla block is $\mathbf{H}\lvert\psi\rangle_L$, completing the logical Hadamard gate. 
\end{itemize}}

\begin{figure}
    \centering
    \includegraphics[width=0.7\linewidth]{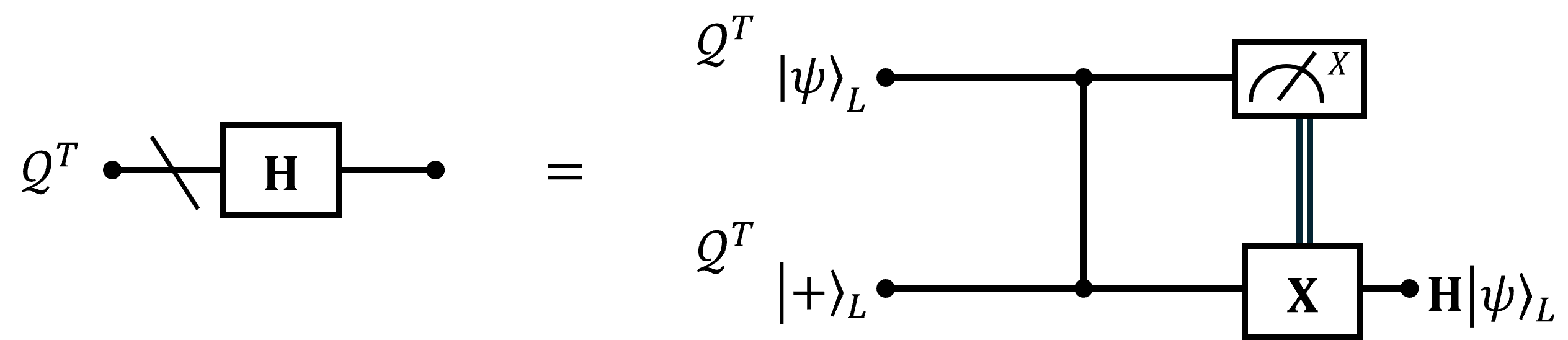}
    \caption{The triorthogonal code logical Hadamard gate circuit with an ancilla block. The logical CZ gate can be applied transversally since $\mathcal{Q}^T$ satisfies~\eqref{Eq:CZsuffcond1}. 
    }
    \label{fig:CZHadamard}
\end{figure}




\section{Transversal Code Switching}
\label{Sec.CodeGeneration}

According to the Eastin-Knill theorem, no QECC can support a universal transversal gate set~\cite{eastin2009restrictions}. However, this  does not prevent one from using multiple codes within a system to form a universal gate set. Specifically, if one code admits transversal non-Clifford gates and another supports transversal Clifford gates, 
\bc{then by teleporting logical information between them,}
a universal gate set can be implemented using only transversal operations within each code~\cite{anderson2014codeswitch,vuillot2019code}.


\bc{For practical implementation of the above approach, we suggest taking an $[[n,k,d]]$ T-triorthogonal code $\mathcal{Q}^T$ and finding a symmetric CSS code $\mathcal{Q}^{\rm Sym}$ such that these two codes form a code pair satisfying conditions~\eqref{Eq:CZsuffcond1} and~\eqref{Eq:CNOTCondition}, i.e., being CNOT and CZ-transversal.}

Unlike former code switching approaches in~\cite{anderson2014codeswitch,vuillot2019code}, our method deterministically  teleports logical states between $\mathcal{Q}^T$ and $\mathcal{Q}^{\rm Sym}$ using transversal state teleportation circuits. This enables the desired logical operations to be implemented transversally throughout the process. 
We refer to this method as {\em transversal code switching}.  
A related method was recently proposed in~\cite{Heu_en_2025TransversalCodeSwithcing}, where the authors perform transversal code switching between 2D and 3D color codes using one-way 
transversal CNOT gates. 
\bc{Our method could be applied to any T-triorthogonal code $\mathcal{Q}^T$ and proposes a systematic construction of symmetric CSS code $\mathcal{Q}^{\rm Sym}$.
}
Moreover, it can be integrated into the Steane error correction procedure without incurring additional overhead. We will demonstrate how this can be done in Sec.~\ref{sec.Steaneframe}.



\subsection{Symmetric Code Generation}

For achieving universal computations, we need to find a symmetric code $\mathcal{Q}^{\rm Sym}$ to pair with 
a T-triorthogonal code $\mathcal{Q}^T$. 
In addition, for fault-tolerant state teleportation, the pair $\mathcal{Q}^T$ and $\mathcal{Q}^{\rm Sym}$ of codes have to support CNOT and CZ-transversality, i.e., they have to satisfy~\eqref{Eq:CNOTCondition} and~\eqref{Eq:CZsuffcond1}.
Here, we generate $\mathcal{Q}^{\rm Sym}$ from $\mathcal{Q}^T$ in a systematic way.
\bc{First note that any symmetric code is Hadamard transversal. Therefore CNOT-transversality implies CZ-transversality, hence it is sufficient to construct $\mathcal{Q}^{\rm Sym}$ satisfying~\eqref{Eq:CNOTCondition}.}

\begin{lemma}\label{Lem:symTtrans}
        \bc{Let $\mathcal{Q}^T=\rm CSS(\mathcal{C}_1,\mathcal{C}_2)$ be a T-triorthogonal $[[n,k,d]]$ code, and let $\mathcal{C}$ be such that  $\mathcal{C}_1\subset \mathcal{C} \subset \mathcal{C}_2$. Then $\mathcal{Q}^T$ and the symmetric $[[n,k,d']]$ CSS code $\mathcal{Q}^{\rm Sym}=\rm CSS(\mathcal{C},\mathcal{C})$ form a CNOT-transversal code pair, and $d'\ge d$.}
\end{lemma}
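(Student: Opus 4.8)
The plan is to verify the two CNOT-transversality conditions in~\eqref{Eq:CNOTCondition} for the pair $\mathcal{Q}^T = \rm CSS(\mathcal{C}_1,\mathcal{C}_2)$ and $\mathcal{Q}^{\rm Sym} = \rm CSS(\mathcal{C},\mathcal{C})$, and then to handle the distance claim $d'\ge d$ separately. In the notation of~\eqref{Eq:CNOTCondition} we are mapping from $\rm CSS(\mathcal{C}_1,\mathcal{C}_2)$ to $\rm CSS(\mathcal{C}_3,\mathcal{C}_4)$ with $\mathcal{C}_3 = \mathcal{C}_4 = \mathcal{C}$. The second condition, $\mathcal{C}_2^\perp \subseteq \mathcal{C}_4^\perp = \mathcal{C}^\perp$, is equivalent to $\mathcal{C}\subseteq \mathcal{C}_2$, which is given by hypothesis. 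So the work is concentrated in the first condition, the quotient-group isomorphism $\mathcal{C}_1/\mathcal{C}_2^\perp \cong \mathcal{C}/\mathcal{C}^\perp$.

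First I would count dimensions. Since $\mathcal{Q}^T$ is a triorthogonal code, $\dim(\mathcal{C}_1/\mathcal{C}_2^\perp) = k$, the number of logical qubits (this is the number of odd-weight rows $\mathbf{G}_1$, by Definition~\ref{Def:triOrthoCodes}). For the symmetric code, $\dim(\mathcal{C}/\mathcal{C}^\perp) = \dim\mathcal{C} - \dim\mathcal{C}^\perp = 2\dim\mathcal{C} - n$; from the general CSS formula $k = k_1 + k_2 - n$ with $k_1 = k_2 = \dim\mathcal{C}$, this is exactly the number of logical qubits of $\mathcal{Q}^{\rm Sym}$. So I need $\dim\mathcal{C} = (n+k)/2$. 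I would argue this is \emph{forced} by the chain $\mathcal{C}_1 \subset \mathcal{C} \subset \mathcal{C}_2$: recall for a triorthogonal code $\mathcal{C}_2 = \mathcal{C}_0^\perp$ where $\mathcal{C}_0$ is generated by the even-weight rows $\mathbf{G}_0$, and $\mathcal{C}_1$ is generated by all of $\mathbf{G}$, so $\mathcal{C}_1^\perp = \mathbf{G}^\perp \subseteq \mathbf{G}_0^\perp = \mathcal{C}_2$ and in fact $\dim\mathcal{C}_1 = \dim\mathcal{C}_0 + k$ while $\dim\mathcal{C}_2 = n - \dim\mathcal{C}_0$. Using the triorthogonality relations (Theorem~\ref{Theor:CZtransver}'s proof shows $\mathcal{C}_1\subset\mathcal{C}_2$, which already requires $\dim\mathcal{C}_1 \le \dim\mathcal{C}_2$) one gets $\dim\mathcal{C}_2 - \dim\mathcal{C}_1 = n - 2\dim\mathcal{C}_0 - k$. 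Here the subtle point — and I expect this to be the main obstacle — is that the lemma as stated presupposes that a $\mathcal{C}$ with $\dim\mathcal{C} = (n+k)/2$ sitting between $\mathcal{C}_1$ and $\mathcal{C}_2$ actually exists; this needs $n+k$ even and $\dim\mathcal{C}_1 \le (n+k)/2 \le \dim\mathcal{C}_2$. I would either add this as a standing hypothesis or show it follows from $\mathcal{C}_2 = \mathcal{C}_1^{\perp_{\rm something}}$-type self-duality structure of triorthogonal codes; in the $[[15,1,3]]$ example $n+k=16$ and $\dim\mathcal{C} = 8$, consistent with a self-dual Reed–Muller-type code.

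Granting the dimension count, the isomorphism $\mathcal{C}_1/\mathcal{C}_2^\perp \cong \mathcal{C}/\mathcal{C}^\perp$ still needs a genuine argument, not just equality of dimensions: I must exhibit, or show existence of, mapping matrices $\mathbf{A} \cong \mathcal{C}_1/\mathcal{C}_2^\perp$ and $\mathbf{B}\cong\mathcal{C}/\mathcal{C}^\perp$ that are \emph{equal} as the relevant cosets, since~\eqref{Eq:CNOTCondition} demands the quotient groups be literally identified (same representative vectors), which is how the transversal CNOT acts qubit-wise. The clean way: take the $k$ odd-weight rows $\mathbf{G}_1$ as representatives. Because $\mathcal{C}_1 \supset \mathbf{G}_1 + \mathcal{C}_2^\perp$ already gives $\mathbf{G}_1 \cong \mathcal{C}_1/\mathcal{C}_2^\perp$, and because $\mathcal{C}_2^\perp \subseteq \mathcal{C}^\perp \subseteq \mathcal{C}$ (from $\mathcal{C}\subseteq\mathcal{C}_2$) with $\mathbf{G}_1 \subseteq \mathcal{C}_1 \subseteq \mathcal{C}$, the same rows $\mathbf{G}_1$ descend to $k$ independent cosets in $\mathcal{C}/\mathcal{C}^\perp$; independence follows because any $\mathbb{F}_2$-combination of rows of $\mathbf{G}_1$ lying in $\mathcal{C}^\perp \subseteq \mathcal{C}_2^\perp$ would already be trivial in $\mathcal{C}_1/\mathcal{C}_2^\perp$. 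By the dimension count these $k$ cosets span $\mathcal{C}/\mathcal{C}^\perp$, so $\mathbf{A} = \mathbf{B} = \mathbf{G}_1$ works simultaneously, establishing~\eqref{Eq:CNOTCondition}. Finally, for $d' \ge d$: the distance of the symmetric code is $d' = \min\{\omega_H(\mathbf{c}) : \mathbf{c}\in \mathcal{C}\setminus\mathcal{C}^\perp\}$, and since $\mathcal{C}\subseteq \mathcal{C}_2$ and $\mathcal{C}^\perp \supseteq \mathcal{C}_2^\perp$, every nonzero coset of $\mathcal{C}/\mathcal{C}^\perp$ is contained in a nonzero coset of $\mathcal{C}_2/\mathcal{C}_2^\perp$; likewise using $\mathcal{C}_1\subseteq\mathcal{C}$ and $\mathcal{C}^\perp\subseteq\mathcal{C}_1^\perp$ the cosets $\mathcal{C}_1/\mathcal{C}_2^\perp$ embed into $\mathcal{C}/\mathcal{C}^\perp$-structure, so the minimum nonzero-coset weight can only increase — I would phrase this as: every logical $X$ or $Z$ representative of $\mathcal{Q}^{\rm Sym}$ is a logical representative of $\mathcal{Q}^T$ too (or a stabilizer of it), whence its weight is $\ge d$. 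The one place to be careful is matching up the $d_1'$ and $d_2'$ pieces of the CSS distance formula correctly under the containments, since the two inclusions $\mathcal{C}_1\subset\mathcal{C}$ and $\mathcal{C}\subset\mathcal{C}_2$ act in opposite directions on the $X$-type and $Z$-type logicals.
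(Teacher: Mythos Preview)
Your overall strategy matches the paper's: both identify the rows of $\mathbf{G}_1$ as simultaneous coset representatives for $\mathcal{C}_1/\mathcal{C}_2^\perp$ and $\mathcal{C}/\mathcal{C}^\perp$, then read off condition~\eqref{Eq:CNOTCondition}. The paper does this by explicitly writing $\mathbf{G}_{\mathcal{C}}^\perp=[\mathbf{G}_0;\mathbf{B}]$ and observing $\mathcal{C}/\mathcal{C}^\perp\cong\mathbf{G}_1$ directly; your abstract route via inclusions and a dimension count is a legitimate alternative.

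However, your independence step contains a genuine error. You write ``any $\mathbb{F}_2$-combination of rows of $\mathbf{G}_1$ lying in $\mathcal{C}^\perp \subseteq \mathcal{C}_2^\perp$ would already be trivial in $\mathcal{C}_1/\mathcal{C}_2^\perp$.'' But the inclusion goes the other way: from $\mathcal{C}\subseteq\mathcal{C}_2$ you get $\mathcal{C}_2^\perp\subseteq\mathcal{C}^\perp$, not $\mathcal{C}^\perp\subseteq\mathcal{C}_2^\perp$ (indeed you stated the correct direction two lines earlier). So knowing $\mathbf{v}\in\mathcal{C}^\perp$ does \emph{not} place $\mathbf{v}$ in $\mathcal{C}_2^\perp$, and the reduction to triviality in $\mathcal{C}_1/\mathcal{C}_2^\perp$ collapses. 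The fix is short: use the inclusion $\mathcal{C}^\perp\subseteq\mathcal{C}_1^\perp$ (from $\mathcal{C}_1\subseteq\mathcal{C}$) together with $\mathbf{G}_1\mathbf{G}_1^T=\mathbf{I}$. If $\mathbf{v}=\sum_i c_i\,\mathbf{g}_i$ lies in $\mathcal{C}^\perp$, then $\mathbf{v}$ is orthogonal to every row $\mathbf{g}_j\in\mathcal{C}_1\subseteq\mathcal{C}$, and $\mathbf{v}\cdot\mathbf{g}_j=c_j$ forces all $c_j=0$. This is exactly what the paper's explicit computation of $\mathcal{C}^\perp$ is encoding.

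On the distance claim: your sketch ``every logical of $\mathcal{Q}^{\rm Sym}$ is a logical of $\mathcal{Q}^T$ (or a stabilizer)'' needs the parenthetical ruled out, since stabilizers can have small weight. The clean statement is that $\mathcal{C}\cap\mathcal{C}_1^\perp=\mathcal{C}^\perp$ (which follows from the same $\mathbf{G}_1\mathbf{G}_1^T=\mathbf{I}$ computation), so any $\mathbf{c}\in\mathcal{C}\setminus\mathcal{C}^\perp$ lies in $\mathcal{C}_2\setminus\mathcal{C}_1^\perp$ and hence has weight $\ge d_2'\ge d$. The paper's own distance paragraph asserts $d_1'=d'$ from the quotient isomorphism, which is not obviously weight-preserving; your instinct to be careful here is well placed, but you should finish the argument rather than leave it at the level of ``the one place to be careful.'' Finally, your extended discussion of whether $\dim\mathcal{C}=(n+k)/2$ and $\mathcal{C}^\perp\subset\mathcal{C}$ hold is unnecessary: the lemma \emph{hypothesizes} that $\mathcal{Q}^{\rm Sym}$ is an $[[n,k,d']]$ symmetric CSS code, so both conditions are part of the given data.
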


\begin{proof}

Since $\mathcal{C}_1\subset\mathcal{C}\subset\mathcal{C}_2$,
the generator matrix of $\mathcal{C}$ can be written as 
\begin{equation}
    \mathbf{G}_{\mathcal{C}}=\left[
    \begin{array}{c}
    \mathbf{G}\\
         \mathbf{B}
    \end{array}\right]=\left[
    \begin{array}{c}
    \mathbf{G}_1\\
    \mathbf{G}_0\\
         \mathbf{B}
    \end{array}\right],
    \label{Eq.GSym}
\end{equation}
where $\mathbf{G}$ is a triorthogonal matrix~\eqref{Eq:TriMatrix}, and $\mathbf{B}$ is a submatrix of $ \mathbf{G}_0^\perp$. Since $\mathcal{C}_2^{\perp}\subset\mathcal{C}^\perp\subset\mathcal{C}_1^\perp$, we can write the parity check matrix of $\mathcal{C}$, $\mathbf{G}_{\mathcal{C}}^\perp$, as follows
\begin{equation}
    \mathbf{G}_{\mathcal{C}}^\perp=\left[
    \begin{array}{c}
    \mathbf{G}_0\\
         \mathbf{D}
    \end{array}\right]\,.
    \label{}
\end{equation}
The restriction  $\mathcal{C}^\perp\subset\mathcal{C}$ of symmetric CSS codes implies that $\mathbf{G}_\mathcal{C}^\perp$ should be a self-orthogonal matrix. As $\mathbf{G}_1$ is not self-orthogonal, $\mathbf{D}$ has to be a submatrix of $\mathbf{B}$. 
According to the definition of the CSS code we have $\lvert
\mathcal{C}\rvert/\lvert\mathcal{C}^\perp\rvert=2^k$. Since the number of rows of $\mathbf{G}_1$ is $k$, we conclude that $\mathbf{D}=\mathbf{B}$.

Thus $\mathcal{C}_1 / \mathcal{C}_2^{\perp}\cong\mathbf{G}_1$ and $\mathcal{C} / \mathcal{C}^{\perp}\cong\mathbf{G}_\mathcal{C}/\mathbf{G}_\mathcal{C}^\perp=\mathbf{G}_1$.
So $\mathcal{C}_1 / \mathcal{C}_2^{\perp}\cong\mathcal{C} / \mathcal{C}^{\perp}$ and $\mathcal{C}_2^\perp\subset\mathcal{C}^\perp$, the codes fulfill the CNOT transversality conditions~\eqref{Eq:CNOTCondition}. We thus  have a transversal CNOT gate from $\mathcal{Q}^{T}$ (control) to $\mathcal{Q}^{\rm Sym}$ (target). Since $\mathcal{C}\subset\mathcal{C}_2$ and $\mathcal{C}_1/\mathcal{C}_2^\perp\subset\mathcal{C}$, we also have CZ transversality.

The code distance of a  $\rm CSS(\mathcal{C}_i,\mathcal{C}_j)$ code is  $d=\min(d_i,d_j^{\perp})$, where $d_i$ and $d_j$ are the code distances of binary spaces $\mathcal{C}_i/\mathcal{C}_j^\perp$ and $\mathcal{C}_j/\mathcal{C}_i^\perp$. We already showed that $\mathcal{C}_1 / \mathcal{C}_2^{\perp}\cong\mathcal{C} / \mathcal{C}^{\perp}$ which indicates that  $d_1=d'$. For a symmetric code with distance $d'$, we have $d=\min(d',d_2)$, accordingly we have $d\leq d'$.
\end{proof}
\begin{theorem}
For any T-transversal CSS code $[[n, k, d ]]$ $\mathcal{Q}^T$ with $n-k = 0\!\!\! \mod\! 2$, there exists a symmetric code $\mathcal{Q}^{\rm Sym}=\rm CSS(\mathcal{C},\mathcal{C})$ such that the pair is CNOT-transversal.
\end{theorem}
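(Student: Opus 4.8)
The plan is to reduce the statement to Lemma~\ref{Lem:symTtrans}: it suffices to exhibit a binary linear code $\mathcal{C}$ with $\mathcal{C}_1\subseteq\mathcal{C}\subseteq\mathcal{C}_2$, $\mathcal{C}^\perp\subseteq\mathcal{C}$, and $\dim\mathcal{C}=(n+k)/2$. Given such a $\mathcal{C}$, the code $\mathcal{Q}^{\rm Sym}=\mathrm{CSS}(\mathcal{C},\mathcal{C})$ is a well-defined symmetric $[[n,k,d']]$ CSS code, and Lemma~\ref{Lem:symTtrans} then yields a CNOT-transversal pair $(\mathcal{Q}^T,\mathcal{Q}^{\rm Sym})$. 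So the whole content is an existence argument for $\mathcal{C}$, and this is exactly where the hypothesis $n-k\equiv 0\pmod 2$ is consumed.

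First I would record the bilinear-form structure of the ambient spaces. From the proof of Theorem~\ref{Theor:CZtransver} we have $\mathcal{C}_1\subseteq\mathcal{C}_2$, $\mathbf{G}_0\mathbf{G}_0^T=\mathbf{0}$ (so $\mathcal{C}_2^\perp\subseteq\mathcal{C}_2$), and $\mathbf{G}_1\mathbf{G}_1^T=\mathbf{I}_k$; together with $\mathbf{G}_1\mathbf{G}_0^T=\mathbf{0}$ from~\eqref{Eq:Cond1} this gives $\mathbf{G}\mathbf{G}^T=\diag(\mathbf{I}_k,\mathbf{0})$, whence $\mathcal{C}_1\cap\mathcal{C}_1^\perp=\mathcal{C}_2^\perp$ (the inclusion $\supseteq$ is immediate, and equality follows by a rank count, writing $r_0:=\mathrm{rank}\,\mathbf{G}_0$ so that $\dim\mathcal{C}_1=k+r_0$ and $\dim\mathcal{C}_2=n-r_0$). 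Now pass to $W:=\mathcal{C}_2/\mathcal{C}_2^\perp$: the standard inner product of $\mathbb{F}_2^n$ descends to a well-defined non-degenerate symmetric bilinear form on $W$ (well-defined since $\mathcal{C}_2^\perp\perp\mathcal{C}_2$, non-degenerate since $(\mathcal{C}_2^\perp)^\perp=\mathcal{C}_2$). Its subspace $\bar{\mathcal{C}}_1:=\mathcal{C}_1/\mathcal{C}_2^\perp$ carries the Gram matrix $\mathbf{I}_k$ in the basis induced by the rows of $\mathbf{G}_1$, hence is itself non-degenerate; consequently $W=\bar{\mathcal{C}}_1\perp U$ with $U:=\bar{\mathcal{C}}_1^{\perp_W}$ again non-degenerate and $\dim U=\dim W-k=(n-2r_0)-k=(n-k)-2r_0$.

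The key step is then the following: $\dim U$ is even (this is precisely where $n-k$ even is used, since $2r_0$ is even), and every even-dimensional non-degenerate symmetric bilinear space over $\mathbb{F}_2$ contains a self-dual subspace $\bar{\mathcal{D}}$, i.e.\ $\bar{\mathcal{D}}=\bar{\mathcal{D}}^{\perp_U}$ with $\dim\bar{\mathcal{D}}=\tfrac12\dim U$. I would prove this by decomposing the form on $U$ as an orthogonal sum of hyperbolic planes and one-dimensional anisotropic lines $\langle 1\rangle$: each hyperbolic plane contributes a Lagrangian line, each pair $\langle 1\rangle\perp\langle 1\rangle$ with anisotropic basis $f,g$ contributes the line $\langle f+g\rangle$, which is self-dual because $\langle f+g,f+g\rangle=1+1=0$ in $\mathbb{F}_2$ and $(f+g)^\perp$ inside that plane is $\langle f+g\rangle$; since $\dim U$ is even the number of $\langle 1\rangle$-summands is even, so they pair up completely. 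Taking $\bar{\mathcal{C}}:=\bar{\mathcal{C}}_1\oplus\bar{\mathcal{D}}\subseteq W$ and letting $\mathcal{C}\subseteq\mathcal{C}_2$ be its preimage, one checks $\mathcal{C}_1\subseteq\mathcal{C}\subseteq\mathcal{C}_2$, that $\bar{\mathcal{C}}^{\perp_W}=\bar{\mathcal{D}}\subseteq\bar{\mathcal{C}}$ (using that $\bar{\mathcal{C}}_1$ is non-degenerate and $\bar{\mathcal{D}}$ is self-dual in $U$), hence $\mathcal{C}^\perp\subseteq\mathcal{C}$, and $\dim\mathcal{C}=r_0+k+\tfrac12\dim U=(n+k)/2$. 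This $\mathcal{C}$ feeds into Lemma~\ref{Lem:symTtrans} and closes the proof; the distance bound $d'\ge d$ and the CZ-transversality come for free from that lemma.

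I expect the main obstacle to be the existence of the self-dual subspace $\bar{\mathcal{D}}$: one cannot merely invoke the symplectic/Lagrangian fact, because a symmetric bilinear form over $\mathbb{F}_2$ need not be alternating, so the anisotropic ($\langle 1\rangle$) part has to be treated separately, and it is exactly the parity of the number of $\langle 1\rangle$-summands -- controlled by $n-k$ being even -- that makes the pairing go through. A secondary technical care point is verifying that the form really descends to $W$ and restricts non-degenerately to $\bar{\mathcal{C}}_1$, since that is what licenses the orthogonal splitting $W=\bar{\mathcal{C}}_1\perp U$ on which everything else rests.
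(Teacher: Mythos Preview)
Your proof is correct and follows the same overall reduction as the paper's: both invoke Lemma~\ref{Lem:symTtrans} and reduce the problem to exhibiting a self-dual code $\mathcal{C}$ sandwiched between $\mathcal{C}_1$ and $\mathcal{C}_2$ (equivalently, a self-orthogonal $\mathbf{B}$ of the right size inside $\mathcal{C}_1^\perp/\mathcal{C}_2^\perp$). The difference is in how existence is established. The paper simply notes that the required row-count $\tfrac{n-k-2m}{2}$ of $\mathbf{B}$ is a nonnegative integer when $n-k$ is even and declares existence; you instead pass to the quotient $W=\mathcal{C}_2/\mathcal{C}_2^\perp$, split off the non-degenerate block $\bar{\mathcal{C}}_1$, and produce a Lagrangian in the complementary piece $U\cong\mathcal{C}_1^\perp/\mathcal{C}_2^\perp$ by decomposing the induced symmetric form into hyperbolic planes and paired $\langle 1\rangle$-summands. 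Your route is more work but is genuinely rigorous---the paper's dimension count alone does not guarantee that a self-orthogonal $\mathbf{B}$ of the needed rank exists, whereas your bilinear-form argument does, and it makes transparent exactly where the parity hypothesis $n-k\equiv 0\pmod 2$ is consumed (pairing up the anisotropic lines).
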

\begin{proof}
    Based on Lemma~\ref{Lem:symTtrans}, to construct $\mathcal{Q}^{\rm Sym}$ from $\mathcal{Q}^T$, we need to find a self-orthogonal matrix $\mathbf{B}$ which is a submatrix of $\mathbf{G}_0^\perp$.
Since the dimension of $\mathbf{G}_\mathcal{C}^\perp$ is $\frac{n-k}{2}$,
the dimension of $\mathbf{B}$ is $\frac{n-k-2m}{2}\times n$, where $m$ is the number of rows in $\mathbf{G}_0$. As $\mathcal{C}_2^{\perp}\subset\mathcal{C}_1^\perp$ implies $m<n-m-k$, it follows that for any T-triorthogonal code with $n-k=0\mod2$, there exists a symmetric CSS code that satisfies the CNOT transversality conditions.
\end{proof}



As stated in Lemma~\ref{Lem:symTtrans}, to  construct the symmetric code $\mathcal{Q}^{\rm Sym}$ from $\mathcal{Q}^T$, we just need to select a full-rank matrix $\mathbf{B}$ which is a submatrix of $\mathbf{G}_0^{\perp}$ with dimension $\frac{n-k-2m}{2}\times n$.

Here we give an example of the $[[15,1,3]]$ T-triorthogonal code and the corresponding $\mathcal{Q}^{\rm Sym}$ code:
\begin{example}
The matrices 
\begin{align}
\mathbf{G}&=
\left[\begin{array}{c}
    \mathbf{G}_0  \\
    \hline
    \mathbf{G}_1
\end{array}\right] 
        =\left[ \begin{array}{ccccccccccccccc}
             1&1&1&1&1&1&1&1&0&0&0&0&0&0&0  \\
             1&1&1&1&0&0&0&0&1&1&1&1&0&0&0  \\
             1&1&0&0&1&1&0&0&1&1&0&0&1&1&0  \\
             1&0&1&0&1&0&1&0&1&0&1&0&1&0&1  \\
             \hline
             1&0&0&1&0&1&1&0&0&1&1&0&1&0&0 
         \end{array}\right]\nonumber
\end{align}

    \begin{align}
        \mathbf{G}_3&=\left[\begin{array}{c}
             \mathbf{G}  \\
             \hline
             \mathbf{B}             
        \end{array}\right]=
        \left[\begin{array}{ccccccccccccccc}
             1&1&1&1&1&1&1&1&0&0&0&0&0&0&0  \\
             1&1&1&1&0&0&0&0&1&1&1&1&0&0&0  \\
             1&1&0&0&1&1&0&0&1&1&0&0&1&1&0  \\
             1&0&1&0&1&0&1&0&1&0&1&0&1&0&1  \\
             1&0&0&1&0&1&1&0&0&1&1&0&1&0&0  \\
             \hline
             0&1&1&0&1&0&0&1&0&0&0&0&0&0&0  \\
             0&0&0&0&0&0&0&0&1&0&1&0&0&1&0  \\
             0&0&0&0&0&0&0&0&1&1&0&0&0&0&1  \\             
        \end{array}\right]\nonumber
    \end{align}
define independent stabilizers of a $[[15,1,3]]$ T-triorthogonal code $\mathcal{Q}^{T}=\rm CSS(\mathcal{C}_1,\mathcal{C}_2)$ and the corresponding symmetric code $\mathcal{Q}^{\rm Sym}=\rm CSS(\mathcal{C}_3,\mathcal{C}_3)$.
Note that matrix $\mathbf{B}$ corresponds to the selected self-orthogonal matrix in~\eqref{Eq.GSym}.
The code $\mathcal{C}_1$, $\mathcal{C}_2$ and  $\mathcal{C}_3$ are classical linear codes with parameters $[15,5,7]$, $[15,11,3]$ and $[15,8,3]$, respectively.
It is important to note that the choice of $\mathcal{C}_3$ is not unique. In this example, we can find at least 8 valid alternatives $\mathcal{C}_3'$ with the same parameters of $\mathcal{C}_3$. Another possible generator matrix is shown below:

\begin{align}
        \mathbf{G}'_3&=\left[\begin{array}{c}
             \mathbf{G}  \\
             \hline
             \mathbf{B}'             
        \end{array}\right]=
        \left[\begin{array}{ccccccccccccccc}
             1&1&1&1&1&1&1&1&0&0&0&0&0&0&0  \\
             1&1&1&1&0&0&0&0&1&1&1&1&0&0&0  \\
             1&1&0&0&1&1&0&0&1&1&0&0&1&1&0  \\
             1&0&1&0&1&0&1&0&1&0&1&0&1&0&1  \\
             1&0&0&1&0&1&1&0&0&1&1&0&1&0&0  \\
             \hline
             1&1&0&0&0&0&0&0&1&1&0&0&0&0&0  \\
             0&1&1&0&0&0&0&0&0&1&1&0&0&0&0  \\
             0&0&1&1&0&0&0&0&0&0&1&1&0&0&0  \\  
        \end{array}\right].\nonumber
    \end{align}
The possibility of choosing $\mathcal{C}_3$ in a multiple possible ways can be important for practical realization of code switching. More detailed studies of the different options will be conducted in future work.
All the details about stabilizer generators of each code are listed in App.~\ref{app.stabilizers}.
\label{exampel}
\end{example}







\subsection{State Teleportation Circuits}
\label{Sec.CircuitStructure}

In this section, we present circuits that enable logical state teleportation between $\mathcal{Q}^T$ and $\mathcal{Q}^{\rm Sym}$, facilitating dynamic code switching during computations.
The teleportation circuit from $\mathcal{Q}^T$ to $\mathcal{Q}^{{\rm Sym}}$ is shown in Fig.~\ref{fig:TxMapping} (a). It works as follows:
\bc{\begin{itemize}
    \item The input data block is encoded in $\mathcal{Q}^T$ with an arbitrary state $\lvert\psi\rangle_L$.
    \item An ancilla block is prepared in the $\lvert0\rangle_L$ state, encoded with $\mathcal{Q}^{{\rm Sym}}$.
    \item A transversal CNOT gate is applied between corresponding $n$ physical qubits of $\mathcal{Q}^T$ (control) and $\mathcal{Q}^{{\rm Sym}}$ (target).
    \item A logical $\mathbf{X}$-basis measurement is performed on the data block $\mathcal{Q}^T$.
    \item Based on the measurement outcome, a logical Pauli $\mathbf{Z}$ operation is applied to the ancilla block. The resulting state in $\mathcal{Q}^{{\rm Sym}}$ is $\lvert\psi\rangle_L$, completing the logical state transfer.
\end{itemize}}

\bc{The reverse teleportation, from $\mathcal{Q}^{{\rm Sym}}$ to $\mathcal{Q}^T$, cannot be achieved using the same circuit, as the logical CNOT in that direction is not transversal. Instead, we use the transversal logical CZ gate between these two codes. The corresponding teleportation circuit is shown in Fig.~\ref{fig:TxMapping}(b), and the procedure is as follows:
\begin{itemize}
    \item The input data block is encoded in $\mathcal{Q}^{\rm Sym}$ with an arbitrary state $\lvert\psi\rangle_L$. 
    \item An ancilla block is prepared in the logical $\lvert+\rangle_L$ state, encoded with $\mathcal{Q}^{{T}}$.
    \item Transversal logical Hadamard gates are applied to the data block.
    \item Transversal CZ gate is applied between $\mathcal{Q}^{{\rm Sym}}$ and $\mathcal{Q}^T$.
    \item A logical $\mathbf{X}$-base measurement is performed on the data block $\mathcal{Q}^{\rm Sym}$. Based on the measurement outcome, a logical Pauli $\mathbf{X}$ operation is applied to $\mathcal{Q}^{T}$.
    \item  The final state in $\mathcal{Q}^{T}$ is $\lvert\psi\rangle_L$, completing the logical teleportation.
\end{itemize}}

With these two circuits, we obtain a two-way teleportation between $\mathcal{Q}^T$ and $\mathcal{Q}^{\rm Sym}$. During computation, logical states can be stored in $\mathcal{Q}^{\rm Sym}$ while Clifford operations are executed transversally. When non-Clifford gates are required, the logical state is teleported back to $\mathcal{Q}^T$, where such operations can also be applied transversally. 

Although each teleportation circuit requires one ancilla code block, this overhead can be reduced by integrating the teleportation procedure into the Steane syndrome extraction process. We provide more details on this in the following section.
\begin{figure}
\centering
\includegraphics[width=0.8\linewidth]{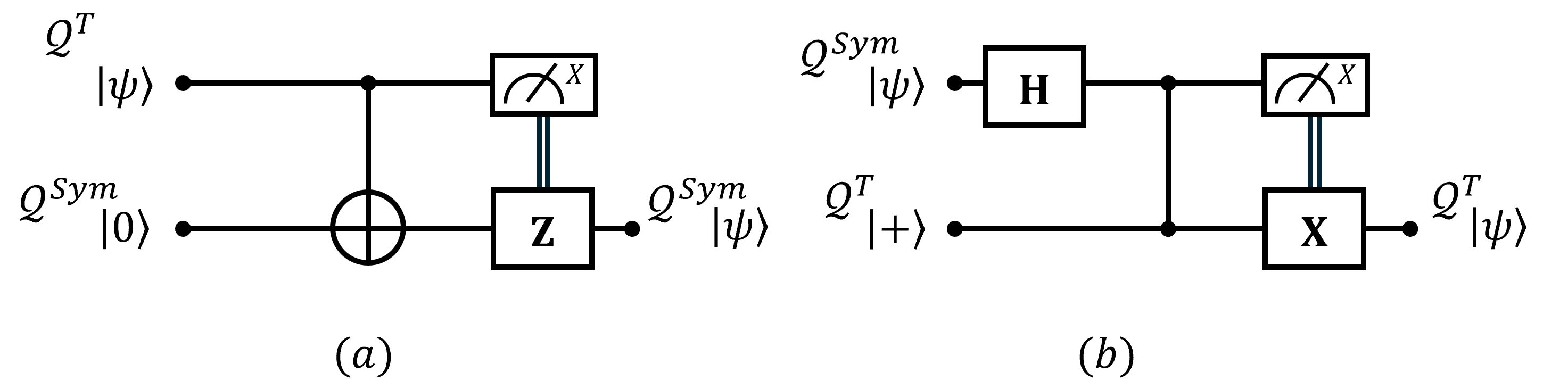}
\caption{The state teleportation circuits between $\mathcal{Q}^T$ and $\mathcal{Q}^{\rm Sym}$. (a) The state teleportation from $\mathcal{Q}^T$ to $\mathcal{Q}^{\rm Sym}$. (b) Since the CNOT gate from $\mathcal{Q}^{\rm Sym}$ to $\mathcal{Q}^T$ is not transversal, we can use CZ gate instead. The teleportation circuit is similar to Fig.\ref{fig:CZHadamard}, but with additional logical Hadamard gate at beginning, and this logical Hadamard gate can be applied transversally on $\mathcal{Q}^{\rm Sym}$.}
\label{fig:TxMapping}
\end{figure}

\section{Steane Error Correction Circuit}
\label{sec.Steaneframe}

In this section, we show how to merge two methods proposed in this paper into Steane error correction procedure. 

\begin{figure}
\centering
\includegraphics[width=0.8\linewidth]{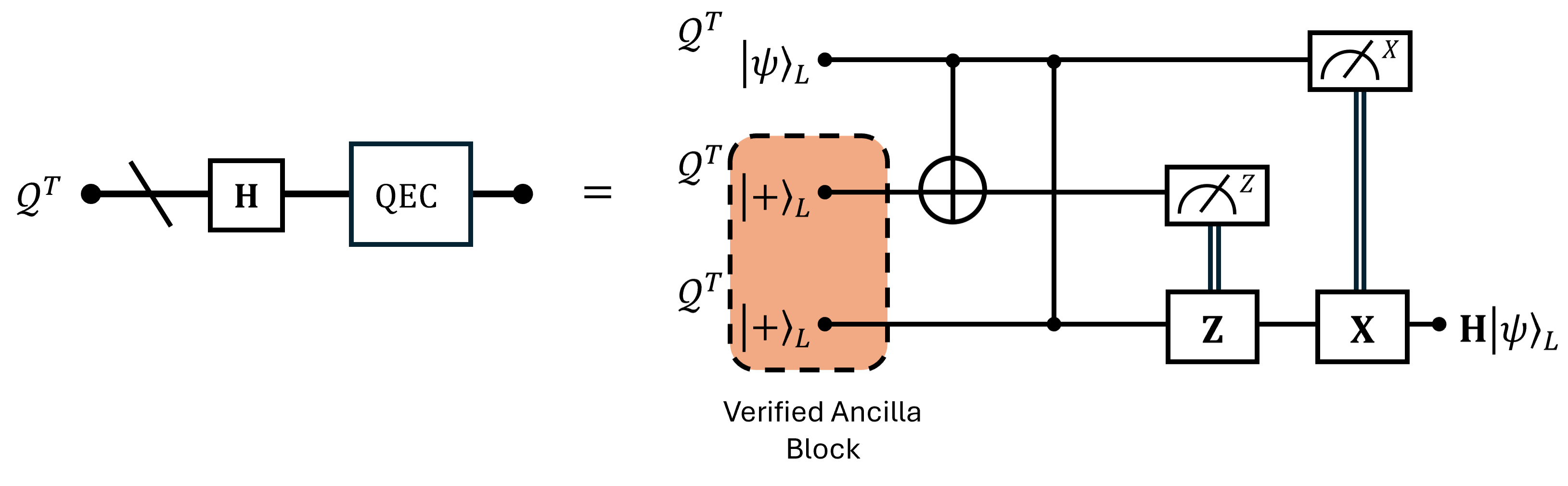}
\caption{Integration of T-triorthogonal logical Hadamard gate circuit to Steane syndrome extraction. The ancilla blocks are prepared in logical $\lvert+\rangle_L$ states and go through a verification procedure to eliminate high-weight errors.}
\label{fig:SteaneHadamard}
\end{figure}

To implement a fault-tolerant logical Hadamard gate, we integrate the operation into the Steane syndrome extraction circuit, as shown in Fig.\ref{fig:SteaneHadamard}.
Both data and ancilla blocks are encoded in the same T-triorthogonal code $\mathcal{Q}^T$. 
Two ancilla blocks are prepared in logical $\lvert+\rangle_L$ state, and verified to suppress high-weight errors.
The first ancilla block is used for Pauli $\mathbf{X}$ error detection, via transversal CNOT gates from the data block. 
Any physical Pauli $\mathbf{X}$ errors in the data block propagate to the ancilla and can be identified by performing $\mathbf{Z}$-base measurements on the ancilla qubits. 
The second ancilla block is then coupled to the data block via transversal CZ gates, followed by a full $\mathbf{X}$-base measurement on all data qubits.

The initial $\mathbf{Z}$-base measurements yield error syndromes that can identify and correct physical $\mathbf{X}$ errors with weight up to $\lfloor\frac{d}{2}\rfloor$, where $d$ is the code distance.
However, since the transversal CZ gate is applied after the CNOT operation, any $\mathbf{X}$ errors  on the data block propagate to the second ancilla block as Pauli $\mathbf{Z}$ errors. 
Therefore, recovery operations must be applied to the second ancilla block, incorporating a basis change from Pauli $\mathbf{X}$ to $\mathbf{Z}$. 

The $\mathbf{X}$-base 
measurements on the data block yield a classical binary vector corresponding to a (possibly corrupted) codeword in $\mathcal{C}_2$.
Using the extracted syndrome, $\mathbf{Z}$ errors on the data can be identified and corrected.
Since $\mathbf{Z}$ errors commute with the CZ gates, they do not propagate to the second ancilla block, which stores in the data block. 
After correction, the resulting binary vector is projected back into $\mathcal{C}_2$, and the logical Pauli $\mathbf{X}$ operator is used to determine whether the logical state is $\lvert+\rangle_L$ or $\lvert-\rangle_L$. If the result is $\lvert-\rangle_L$ state, a logical $\mathbf{X}$ operator is applied as feedback. 
This protocol ensures that any $t<\lfloor\frac{d}{2}\rfloor$ physical errors will result in at most $t$ errors in the final logical state, thereby preserving the fault-tolerant threshold.

As discussed previously, the logical state teleportation circuits from $\mathcal{Q}^T$ to $\mathcal{Q}^{\rm Sym}$ shown in Fig.~\ref{fig:TxMapping}(a) can be embedded into the syndrome extraction process in a manner that preserves fault-tolerance. The teleportation circuit in Fig.~\ref{fig:TxMapping}(a), merged with Steane error correction procedure is shown in Fig.~\ref{fig:StateTele}. 
The circuit on the right represents the process in which the logical state, initially encoded in $\mathcal{Q}^T$ is teleported into $\mathcal{Q}^{\rm Sym}$ while simultaneously performing one round of error correction. Similarly, the merged version of the circuit in Fig.~\ref{fig:TxMapping}(b) is shown in Fig.~\ref{fig:SymtoTQEC}, where the logical state is teleported from $\mathcal{Q}^{\rm Sym}$ to $\mathcal{Q}^T$, followed by a round of error correction. 

After integration into the Steane error correction procedure, both the CZ-based logical Hadamard method from section \ref{Sec.CZHadamard} and transversal code switching method from Sec.~\ref{Sec.CodeGeneration} require comparable overhead. 
In scenarios where Hadamard and non-Clifford gates are interleaved,
such as in a gate sequence like $-\mathbf{T}-\mathbf{H}-\mathbf{T}-$, 
the direct Hadamard implementation implementation is more efficient, as code switching must be applied twice per operation in the alternative approach. However, in scenarios involving frequent Clifford operations, the code switching method offers greater flexibility. For instance, in high physical-fidelity regimes, error correction can be applied less frequently, making this method potentially more resource efficient.
Also, compared with the logical Hadamard method, transversal code switching is natively more effective for managing non-local errors if universal distributed quantum computing is realized using 2G quantum communication, see~\cite{bayanifar2025transversality}.


\begin{figure}
    \centering
    \includegraphics[width=0.9\linewidth]{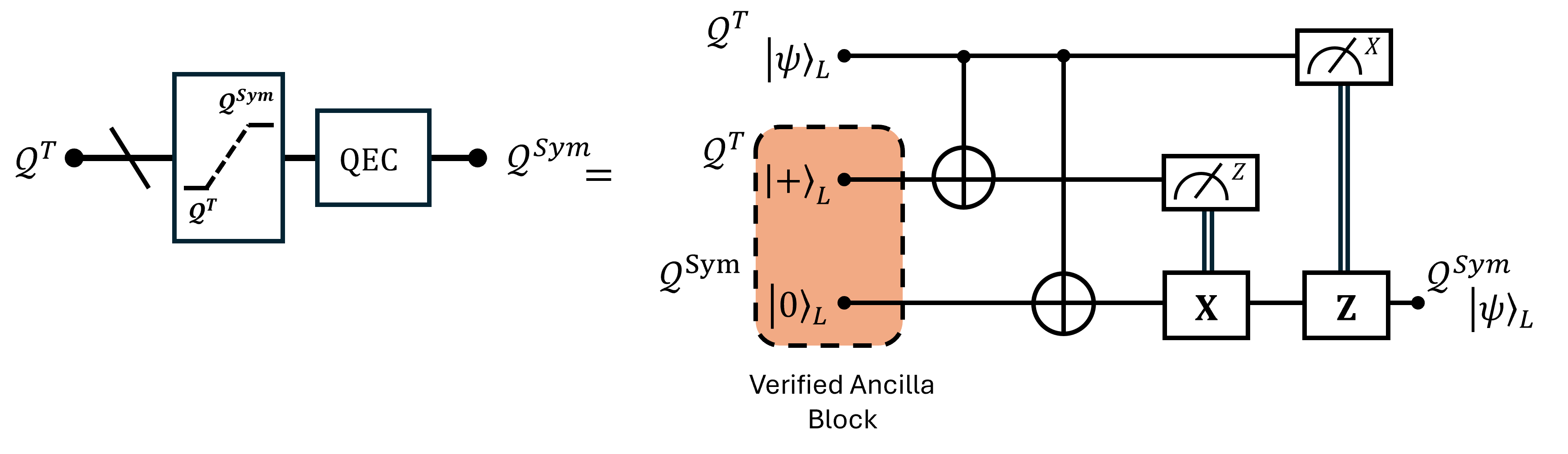}
    \caption{The state teleportation circuit from $\mathcal{Q}^T$ to $\mathcal{Q}^{\rm Sym}$, embedded with Steane syndrome extraction procedure. After the circuit is executed, the logical state is re-encoded in $\mathcal{Q}^{\rm Sym}$, and one round of error correction has been completed. }
    \label{fig:StateTele}
\end{figure}
\begin{figure}
    \centering
    \includegraphics[width=0.9\linewidth]{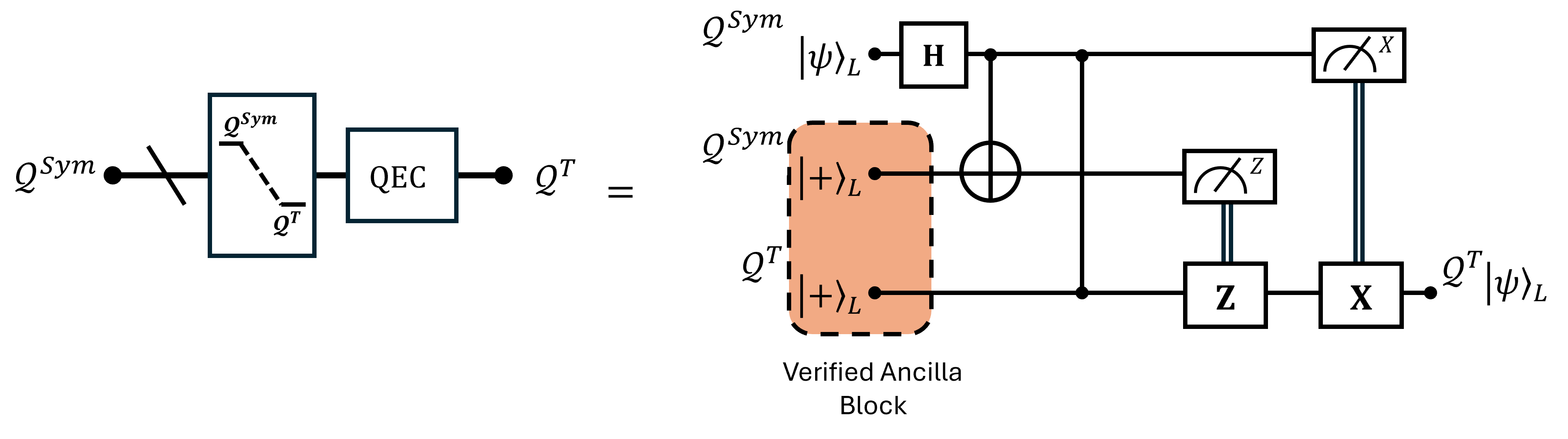}
    \caption{The state teleportation circuit from $\mathcal{Q}^{\rm Sym}$ to $\mathcal{Q}^T$, embedded with Steane syndrome extraction procedure. 
    After the circuit is executed, the logical state is re-encoded in $\mathcal{Q}^{T}$, and one round of error correction has been simultaneously completed.}
    \label{fig:SymtoTQEC}
\end{figure}

\section{Qubit Resource Estimation}
\label{Sec.ResourceEST}
In this section, we compare the resource cost of our proposed protocols with those of previously reported methods~\cite{paetznick2013universal,Heu_en_2025TransversalCodeSwithcing, Butt_2024FTcodeswitching}, using the number of qubits and quantum gates as key metrics. Taking the 15-qubit code, which is the smallest distance-3 T-triorthogonal code~\cite{koutsioumpas2022smallestcodetransversalt} as an example, we explicitly calculate and analyze the resource requirements for each protocol.

\subsection{Logical Hadamard Gate Protocols}
First, we analyze the logical Hadamard gate using the CZ-based protocol (see Fig.~\ref{fig:CZHadamard}) and compare it with the logical Hadamard protocol proposed in~\cite{paetznick2013universal}.
Since both protocols allow integration into the Steane error correction framework, the protocols require the same number of resources: two ancilla blocks which means additional 30 ancilla qubits and 30 CNOT (or equivalently CZ) gates, neglecting the cost of fault-tolerant state preparation.

However, our protocol, in contrast to the one from~\cite{paetznick2013universal}, does not compromise the  error correction ability of the code. Below we show that in the case of 15-qubit code our protocol allows correction three time more X-errors than the protocol form~\cite{paetznick2013universal}. 

To see this, we compare the logical error rate after a full round of error correction. In the protocol of~\cite{paetznick2013universal}, transversal physical Hadamard gates are applied to the data qubits, followed by an $X$-error correction procedure. However, the physical Hadamard operation does not preserve the code space, thereby reducing the effectiveness of the subsequent $X$-error correction procedure. For the 15-qubit code, the $X$-error correction procedure can ideally correct up to three Pauli-$X$ errors. After applying transversal Hadamard gates, some of the $Z$-stabilizers must instead be used to project the state back into the code space, reducing the correction capacity to only one Pauli-$X$ error.  The capability to correct Pauli-$Z$ errors remains limited to one.
Let us consider the following physical qubit error model: 
$$
\mathcal{N}(\bm{\rho})=(1-2p)\bm{\rho}+p\mathbf{X}\bm{\rho}\mathbf{X}+p\mathbf{Z\bm{\rho}\mathbf{Z}}\, .
$$
As there are now $2\times\binom{15}{2} = 210$ uncorrectable weight-2 errors, the logical error rate after a complete round of $X$- and $Z$-error correction is $210 \, p^2+O(p^3)$.

In the CZ-based logical Hadamard gate protocol of Section~\ref{Sec.CZHadamard}, Pauli-$X$ errors are detected and corrected using the first ancilla block. Assuming the logical ancilla states are verified and error-free, Pauli-$X$ errors on the data qubits propagate as $X$ errors to the first ancilla and as $Z$ errors to the second ancilla. Recovery $Z$ operations are then applied to the second ancilla block, allowing up to three Pauli-$X$ errors to be corrected in the data block.
Since the data block is measured in the logical $X$ basis, any  Pauli-$Z$ errors in the data blocks directly affect the outcome. For the 15-qubit code, single-qubit $Z$ errors can still be corrected. Under the same noise model, the resulting logical error rate for our protocol is $105\, p^2+O(p^3)$, which is a factor of two improvement over the previous approach.

\subsection{Transversal Code Switching}

\begin{table}[]
\centering
\begin{tabular}{|l|rr|rr|rr|rr|}
\hline
    Methods   & \multicolumn{4}{c|}{No State Prep.}  & \multicolumn{4}{c|}{With State Prep.}     \\ \cline{2-9} 
                     & \multicolumn{2}{c|}{\phantom{2}Qubits\phantom{2}} & \multicolumn{2}{c|}{Gates} &\multicolumn{2}{c|}{\phantom{2}Qubits\phantom{2}} & \multicolumn{2}{c|}{Gates} \\ \hline
   Our method   & 45& 45 & 30& 30 & 48& 48   & 96& 96          \\ \hline
    Flag switching~\cite{Butt_2024FTcodeswitching}    & 
     17& 18 & 174& 204       & 17& 18      & 174& 204           \\ \hline
Transversal~\cite{Heu_en_2025TransversalCodeSwithcing} Flag& 
24& 25            & 43& 139     & 25& 26      & 54& 197          \\ 
\hline
Transversal~\cite{Heu_en_2025TransversalCodeSwithcing} Steane& 
36& 52            & 21& 37     & 38& 57       & 54& 211        \\ 
\hline

\end{tabular}
\caption{We compare the total resource cost, both in terms of qubits and two-qubit gates.
The first value in each block indicates the cost of performing code switching from the $[[15,1,3]]$ code to the Steane code (or Symmetric code in our method) along with one round of error correction. 
The second value indicates the cost for reverse switching. Our method has the same cost in both directions.
The comparison is made both with and without 
the cost of logical state preparation. “Transversal flag” and “Transversal Steane” refer to the use of the flag-based fault-tolerant and Steane error correction methods, respectively.  For the flag-based code-switching protocol~\cite{Butt_2024FTcodeswitching}, the resource requirements remain the same with or without state preparation, as no additional ancilla block is prepared in this method.}
\label{Table.Compare}
\end{table}

In this section, we evaluate the resource cost of performing code switching together with one full round of error correction, and compare the total overhead of our deterministic method with prior art protocols. 
In Table~\ref{Table.Compare}, we show the different cost for three protocols - the transversal codes witching method of~Section~\ref{Sec.CodeGeneration}, the deterministic fault-tolerant code switching protocol of~\cite{Butt_2024FTcodeswitching}, and the recently proposed transversal code switching protocol of~\cite{Heu_en_2025TransversalCodeSwithcing}. 

The comparison is performed under two scenarios. In the first, the ancilla states are assumed to be pre-prepared, and the corresponding preparation cost is not included. In the second, a more comprehensive scenario is considered, where ancilla states are prepared when needed, and their preparation cost is included in the total resource estimate. 
Based on the special structural correlation between the two switched codes, we propose a modified state preparation method using flag qubits, which significantly reduces both gate and qubit overhead compared with the conventional Steane's error correction approach. In the following, we explain how to find the corresponding costs given in the Table~\ref{Table.Compare} for our scheme.

\subsubsection{Cost without State Preparation}
In transversal code switching protocols, ancilla block are initialized in logical $\lvert0\rangle_L$ or $\lvert+\rangle_L$ states. The preparation of these states can be resource-intensive, but it can also be performed in advance so that the ancilla blocks are ready for use when required. Therefore, in the first part of our analysis, we neglect the cost of ancilla state preparation and assume that these states are pre-prepared and readily available.

The deterministic code switching protocol introduced in~\cite{Butt_2024FTcodeswitching} employs two flag qubits to maintain fault tolerance during transitions between the $[[7,1,3]]$ Steane code and the $[[15,1,3]]$ triorthogonal code. 
To ensure fault-tolerant switching, a large number of CNOT gates are necessary. Specifically, to prevent high-weight Pauli-$X$ errors that may arise during ancilla preparation, a round of $X$-error correction on the triorthogonal code must be performed before switching from the triorthogonal to the Steane code. 
Switching from the Steane code to the triorthogonal code requires 72 CNOT gates, while a round of $X$-error correction procedure with flag-qubit fault tolerance requires 120 additional CNOT gates. The reverse switching, from the triorthogonal code back to the Steane code, requires 18 CNOT gates. 
Following the switching process, a full round of error correction is performed using the flag-qubit error correction method. Depending on the final code, this step requires either 3 ancilla qubits and 132 CNOT gates (for the Steane code) or 2 ancilla qubits and 36 CNOT gates (for the triorthogonal code)~\cite{chao2018faultflag,Prabhu_2023Flag}. In total, the resource cost for the combined code switching and error correction procedure is 17 qubits and 174 CNOT gates when switching from the triorthogonal to the Steane code, and 18 qubits and 204 CNOT gates for the reverse direction.

Another transversal code-switching method was recently proposed in~\cite{Heu_en_2025TransversalCodeSwithcing}. In this work, the authors demonstrate code switching between a two-dimensional color code (the Steane code) and a three-dimensional color code (the tetrahedral code) using one-way transversal CNOT gates. Excluding the cost of state preparation, their method requires 22 qubits and 7 CNOT gates to switch from one code to the other. After the switching, an additional round of error correction is performed on the Steane code, requiring 2 ancilla qubits and 36 CNOT gates, or on the tetrahedral code, requiring 3 ancilla qubits and 132 CNOT gates, both implemented using flag-qubit fault-tolerant techniques. The total resource cost for the complete procedure is therefore 24 qubits with 42 CNOT gates when switching from the tetrahedral to the Steane code, and 25 qubits with 139 CNOT gates for the reverse direction.
Alternatively, when Steane’s error correction framework is employed, the total resource cost is 36 qubits (22 qubits if fast-rest in the physical qubits is available) with 21 CNOT gates for the Steane code, and 52 qubits with 37 CNOT gates for the tetrahedral code.

In our code-switching procedure, transversal CNOT or CZ gates are used to teleport the logical state from the data block to an encoded ancilla block. This step requires only an additional 15 qubits and 15 CNOT (or CZ) gates. Since our protocol can be seamlessly integrated into the Steane error correction framework, we adopt Steane’s method for the subsequent error correction process. As illustrated in Fig.~\ref{fig:StateTele} and Fig.~\ref{fig:SymtoTQEC}, two ancilla blocks are prepared and verified in the logical $\lvert+\rangle_L$ and $\lvert0\rangle_L$ states. Transversal CNOT gates are then applied, followed by measurements on all physical qubits, teleporting the logical state from one code to the other.
This procedure is fault-tolerant, as all operations are transversal and the $2n$ physical qubits in data and the first ancilla block are measured simultaneously at the end of the circuit.
Any single-qubit error occurring during the process can propagate to at most one single-qubit error in the final logical state. 
Neglecting the cost of logical state preparation, the total resources required to complete both code switching and error correction are three ancilla blocks and two sets of transversal two-qubit gates. For the 15-qubit code, this corresponds to 45 qubits and 30 two-qubit gates.

Comparing with the flag based methods, our protocol requires a higher qubit overhead but significantly fewer gates. Comparing to using Steane method for \cite{Heu_en_2025TransversalCodeSwithcing}, our protocol requires comparable numbers of  qubits and gates.  In regimes with relatively high gate error rates, this reduction in gate operations should lead to better overall performance. Furthermore, unlike codes used in~\cite{Heu_en_2025TransversalCodeSwithcing}, our codes $\mathcal{Q}^T$ and $\mathcal{Q}^{\rm Sym}$ are simultaneously both CNOT and CZ transversal. This means that frequent code switching is unnecessary. This property enables additional circuit-level optimization, where the transversality can be exploited to minimize the number of switching operations required throughout computation.

\subsubsection{Cost with State Preparation}

In this section, we take into account the resource cost associated with logical state preparation. We first briefly review the state preparation methods used in other protocols and analyze their corresponding resource requirements. Building on the special correlation between the two switching codes, we then propose a new fault-tolerant state preparation method integrated with our transversal code-switching protocol, which effectively reduces the overall gate overhead. The resource costs for all protocols are summarized in Table~\ref{Table.Compare}.

Since the logical state is switched between different codes, fault-tolerant ancilla state preparation imposes additional requirements compared with conventional methods~\cite{Chamberland_2018flag,postler2022demonstration}. Errors that are correctable in one code may become uncorrectable, and cause a logical error, after switching to another code. In particular, high-weight Pauli-$X$ errors that are correctable within the triorthogonal code can propagate into logical errors once the state is mapped to the Steane code, which has a smaller code distance. Therefore, an additional round of $X$-error correction must be performed before switching from the $[[15,1,3]]$ code to Steane code~\cite{Butt_2024FTcodeswitching}.

In the method proposed in~\cite{Heu_en_2025TransversalCodeSwithcing}, the 15-qubit ancilla state also requires the removal of high-weight $X$-errors. Instead of performing an additional round of $X$-error correction, the authors measure a subset of the $Z$-stabilizer generators to eliminate these high-weight errors. Using their approach, the non–fault-tolerant preparation of the $\lvert+\rangle_L$ state in the 15-qubit code requires 15 qubits and 25 CNOT gates. The subsequent verification procedure adds 5 additional ancilla qubits (one with fast-reset qubit) and 33 CNOT gates, resulting in a total of 20 qubits and 58 CNOT gates. For the Steane code, the corresponding preparation requires 8 qubits and 11 CNOT gates.

In our method, the symmetric code generated from the T-triorthogonal code has a lower capacity for correcting Pauli-$X$ errors, since $Z$-stabilizer group of the symmetric code is a subset of that of the T-triorthogonal code $\mathcal{C}_3^\perp\subset\mathcal{C}_1^\perp$, while both codes have the same code distance for Pauli-$Z$ errors. 
Therefore, it is essential to ensure that high-weight Pauli-$X$ errors do not induce logical errors after switching from $\mathcal{Q}^T$ to $\mathcal{Q}^{\rm Sym}$. 
As shown in Fig.~\ref{fig:StateTele}, the first ancilla block is prepared in $\lvert+\rangle_L$ using $\mathcal{Q}^T$ code.The standard fault-tolerant state verification procedure guarantees that a single Pauli-$Z$ error does not produce a logical $Z$ error and therefore does not flip the logical state from $\lvert+\rangle_L$ to $\lvert-\rangle_L$. However, high-weight $X$ error that are not detected during verification can still occur, potentially leading to nontrivial syndromes during $Z$-basis measurement. After applying the corresponding recovery $X$ operations in  $\mathcal{Q}^{\rm Sym}$, these high-weight $X$ errors may be mapped into the symmetric code and cause logical errors. 
It is worth noting that because $\mathcal{Q}^T$ and $\mathcal{Q}^{\rm Sym}$ share a subset of common $Z$-stabilizer generators, the symmetric code retains partial capability to correct such high-weight Pauli-$X$ errors. 
Consider an error vector $\mathbf{e}=\mathbf{D(a_e,0)}$ that occurs in $\mathcal{Q}^{T}$ and flips only the overlapping $Z$-stabilizer generators flipping. This error leads to a recovery operation $\mathbf{R}=\mathbf{D(a_R,0)}$ that restores the stabilizer outcomes. Upon switching, this recovery operation is mapped into $\mathcal{Q}^{\rm Sym}$ as $\mathbf{D(0,a_R)}$. Since $\mathcal{Q}^{\rm Sym}$ has identical $X$- and $Z$-stabilizer generators, the error vector $\mathbf{D(a_R,0)}$ that is correctable under $X$-error correction in $\mathcal{Q}^T$ corresponds to $\mathbf{D(0,a_R)}$, which is likewise correctable under $Z$-error correction in $\mathcal{Q}^{\rm Sym}$.

Owing to this property, an additional step can be incorporated into the verification procedure, in which only the stabilizer generators that are excluded from $\mathcal{Q}^{\rm Sym}$ are measured. 
The prepared state is accepted only if all measured syndromes are trivial; otherwise, the state is discarded and the preparation process is restarted. This additional verification step ensures that the remaining high-weight $X$ errors can be detected and subsequently corrected by $\mathcal{Q}^{\rm Sym}$.
The verification circuit for the code presented in Example.~\ref{exampel} is shown in App.~\ref{app.circuit}

The conventional fault-tolerant preparation of $\lvert+\rangle_L$ state in $\mathcal{Q}^T$ requires 16 qubits (15 data qubits and 1 ancilla) and 32 CNOT gates~\cite{chao2018faultflag,Butt_2024FTcodeswitching}. Our additional verification step introduces 2 ancilla qubits (or 1 ancilla if a fast-reset qubit is available) and 12 CNOT gates~\cite{rodriguezblanco2025faulttolerantcorrectionreadyencoding713}. In total, the verified state preparation procedure requires 17 qubits and 44 CNOT gates for preparing $\lvert+\rangle_L$ state in $\mathcal{Q}^T$. 
For the state preparation of $\mathcal{Q}^{\rm Sym}$, a non–fault-tolerant encoding circuit followed by a verification procedure can be used. This approach requires 16 qubits and 22 CNOT gates~\cite{mqt,steane2004fastfaulttolerantfilteringquantum}. The corresponding state preparation circuit is shown in App.~\ref{app.circuit} Fig.~\ref{fig:SymCodeStateP}.

The fault-tolerant state preparation can be achieved in different manner. Using the Steane's method shown in Fig.~\ref{fig:SteaneEC}, one need three additional blocks with transversal CNOT gates to verify an ancilla block. The overhead is significantly larger than flag qubit fault-tolerant manner.

After considering the extra cost for state preparation, our method significantly reduce the overhead of CNOT gates compare with other methods, due to the exploitation of correlation of structure between two switching codes. In the NISQ system with limited fidelity on two-qubit gates, our method can outperform than others.

\section{Conclusion}

We have presented two methods to achieve universal fault-tolerant quantum computation using T-triorthogonal codes. The first is an optimized construction of the logical Hadamard gate using the transversal CZ of T-triorthogonal codes, reducing circuit overhead while maintaining fault-tolerance. When combined with $\mathbf{T}$-transversality, this results in a universal fault-tolerant scheme for T-triorthogonal codes. The second method introduces a symmetric CSS code generated from a T-triorthogonal code, enabling transversal implementation of all logical operations via state teleportation between the two codes. 
We provided explicit criteria for constructing the symmetric code, ensuring they meet the conditions for transversal CNOT and CZ gates. 
Both methods can be generalized to any $\mathbf{T}$-transversal triorthogonal code. 

Using a 15-qubit T-triorthogonal code as an example, we demonstrated the validity of our approach and constructed teleportation circuits that enable efficient logical state teleportation. Crucially, we showed that both the optimized Hadamard gate and teleportation protocols can be integrated into the Steane error correction framework without requiring additional resource overhead. This integration preserves fault-tolerance and makes our approach practically viable for scalable quantum architectures. 
Comparing the resource cost of our protocol with those of other methods on the 15-qubit code shows that our approach reduces the overhead of universal quantum computation by minimizing gate and qubit requirements.
Future work may investigate generalizations to other code families and explore integration into alternative fault-tolerant frameworks such as Knill's error correction.

\bibliographystyle{quantum}
\bibliography{apssamp}

\onecolumn
\appendix

\section{Stabilizer generators of Example.~\ref{exampel}}
\label{app.stabilizers}

Here we give the full stabilizer generators in Pauli terms of code $\mathcal{Q}^T$ and $\mathcal{Q}^{\rm Sym}$ we used in Example.~\ref{exampel}.

For $[[15,1,3]]$ T-triorthogonal code $\mathcal{Q}^T$, we have 4 $X$-stabilizer and 10 $Z$-stabilizer generators:
\begin{align*}
    &\mathbf{S_x^1=X_1X_2X_3X_4X_5X_6X_7X_8},
    &\mathbf{S_x^2=X_1X_2X_3X_4X_9X_{10}X_{11}X_{12}}\\
    &\mathbf{S_x^3=X_1X_2X_5X_6X_9X_{10}X_{13}X_{14}}, &\mathbf{S_x^4=X_1X_3X_5X_7X_9X_{11}X_{13}X_{15}}\\
    &\mathbf{S_z^1=Z_1Z_2Z_3Z_4Z_5Z_6Z_7Z_8},&
    \mathbf{S_z^2=Z_1Z_2Z_3Z_4Z_9Z_{10}Z_{11}Z_{12}}\\
    &\mathbf{S_z^3=Z_1Z_2Z_5Z_6Z_9Z_{10}Z_{13}Z_{14}}, &\mathbf{S_z^4=Z_1Z_3Z_5Z_7Z_9Z_{11}Z_{13}Z_{15}}\\
    &\mathbf{S_z^5=Z_{1}Z_{2}Z_{3}Z_{4}},&
    \mathbf{S_z^6=Z_{1}Z_{2}Z_{5}Z_{6}}\\
    &\mathbf{S_z^7=Z_{1}Z_{3}Z_{5}Z_{7}},&
    \mathbf{S_z^8=Z_{1}Z_{2}Z_{9}Z_{10}}\\
    &\mathbf{S_z^9=Z_{1}Z_{5}Z_{9}Z_{13}},&
    \mathbf{S_z^{10}=Z_{1}Z_{3}Z_{9}Z_{11}}.
\end{align*}
The logical operators of T-triorthogonal code are:
\begin{align*}
    &\mathbf{Z_L=Z_1Z_2Z_{15}}, &\mathbf{X_L=X_1X_4X_6X_7X_{10}X_{11}X_{13}}.
\end{align*}
The weight-7 Pauli-$X$ operator gallows the code to correct up to weight-3 Pauli-$X$ errors. As for Pauli-$Z$ errors, only single-qubit Pauli-$Z$ error can be corrected.

The paired symmetric code $\mathcal{Q}^{\rm Sym}$ has 7 $X$-stabilizer  and 7 $Z$-stabilizer generators:
\begin{align*}
&\mathbf{S_x^1=X_1X_2X_3X_4X_5X_6X_7X_8}, &\mathbf{S_z^1=Z_1Z_2Z_3Z_4Z_5Z_6Z_7Z_8},\\  &\mathbf{S_x^2=X_1X_2X_3X_4X_9X_{10}X_{11}X_{12}},
&\mathbf{S_z^2=Z_1Z_2Z_3Z_4Z_9Z_{10}Z_{11}Z_{12}},\\
      &\mathbf{S_x^3=X_1X_2X_5X_6X_9X_{10}X_{13}X_{14}},
      &\mathbf{S_z^3=Z_1Z_2Z_5Z_6Z_9Z_{10}Z_{13}Z_{14}},\\
      &\mathbf{S_x^4=X_1X_3X_5X_7X_9X_{11}X_{13}X_{15}},
      &\mathbf{S_z^4=Z_1Z_3Z_5Z_7Z_9Z_{11}Z_{13}Z_{15}},\\
    &\mathbf{S_x^5=X_1X_2X_3X_4},
    &\mathbf{S_z^5=Z_1Z_2Z_3Z_4},\\
    &\mathbf{S_x^6=X_1X_2X_5X_6},
    &\mathbf{S_z^6=Z_1Z_2Z_5Z_6},\\
    &\mathbf{S_x^7=X_1X_3X_5X_7},
    &\mathbf{S_z^7=Z_1Z_3Z_5Z_7}
\end{align*}

The logical operators of symmetric code are:
\begin{align*}
    &\mathbf{Z_L=Z_9Z_{10}Z_{15}}, &\mathbf{X_L=X_9X_{10}X_{15}}.
\end{align*}

The minimum weight of both logical operators are 3, it shows $\mathcal{Q}^{\rm Sym}$ can only promise to correct single Pauli-$X$ and $Z$ error.

\section{The modified state preparation circuit}
\label{app.circuit}

In this section, we present circuits for T-triorthogonal code $\mathcal{Q}^T$ extra verification procedure and state preparation circuit for $\mathcal{Q}^{\rm Sym}$.

Fig.~\ref{fig:QTSP} illustrates the additional verification procedure described above. Using $[[15,1,3]]$ T-triorthogonal code as an example, we demonstrate how this verification is performed. Since the $Z$-stabilizer generators of $\mathcal{Q}^{\rm Sym}$ can be regarded as a subgroup of those of $\mathcal{Q}^T$, it is sufficient to measure only the stabilizer generators that belong exclusively to $\mathcal{Q}^T$. This extra verification step ensures that high-weight errors, which are uncorrectable in $\mathcal{Q}^{\rm Sym}$, are detected and removed. If any nontrivial syndromes are observed, the prepared state is discarded and the entire procedure is restarted from the beginning.

\begin{figure}
    \centering
    \includegraphics[width=0.95\linewidth]{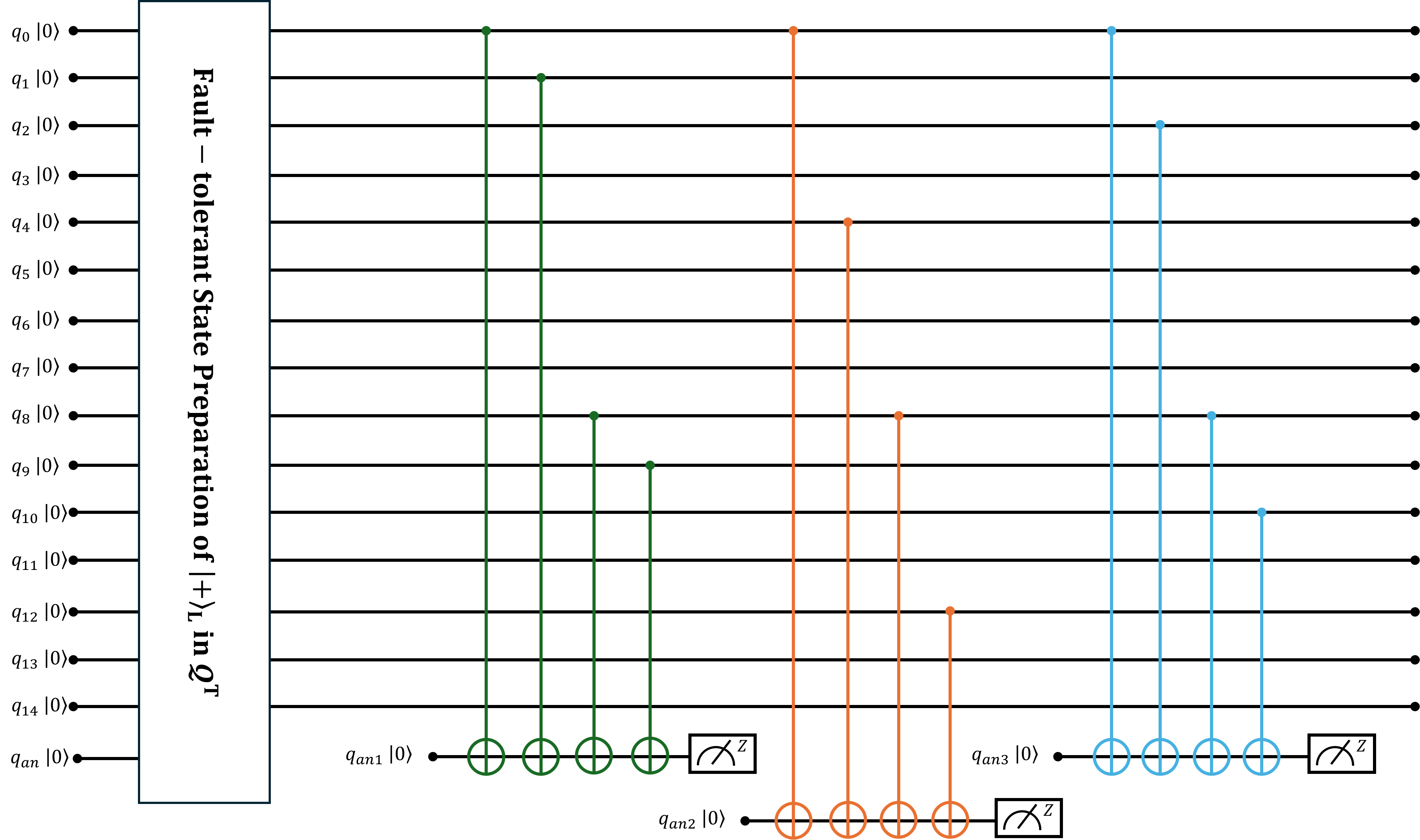}
    \caption{Circuit for additional verification of the $\lvert+\rangle_L$ state in $\mathcal{Q}^T$. The $\lvert+\rangle_L$ state is first prepared and verified in circuit described in~\cite{Butt_2024FTcodeswitching}. Three additional ancilla qubits are then used for extra verification step, during which the additional stabilizer generators are measured. For $\mathcal{Q}^T$ and $\mathcal{Q}^{\rm Sym}$ in Example.~\ref{exampel}, these extra $Z$-stabilizer generators correspond to $\mathbf{S_z^8}$, $\mathbf{S_z^9}$ and $\mathbf{S_z^{10}}$, as listed in App.~\ref{app.stabilizers}. The corresponding measurement circuits are shown in green, orange, and blue. If any nontrivial syndrome is detected, the prepared state is discarded and the preparation process is restarted. }
    \label{fig:QTSP}
\end{figure}

Fig.~\ref{fig:SymCodeStateP} shows the flag-based fault-tolerant state preparation circuit for $\mathcal{Q}^{\rm Sym}$. This circuit prepares the logical $\lvert0\rangle_L$ state, using an ancilla qubit to measure the logical $Z$ operator and detect potential high-weight Pauli-$X$ errors. The circuit for preparing the logical $\lvert+\rangle_L$ state follows the same structure, with additional $15$ physical Hadamard gates applied at the end to transform the logical $\lvert0\rangle_L$ into logical $\lvert+\rangle_L$ state.

\begin{figure}
    \centering
    \includegraphics[width=0.9\linewidth]{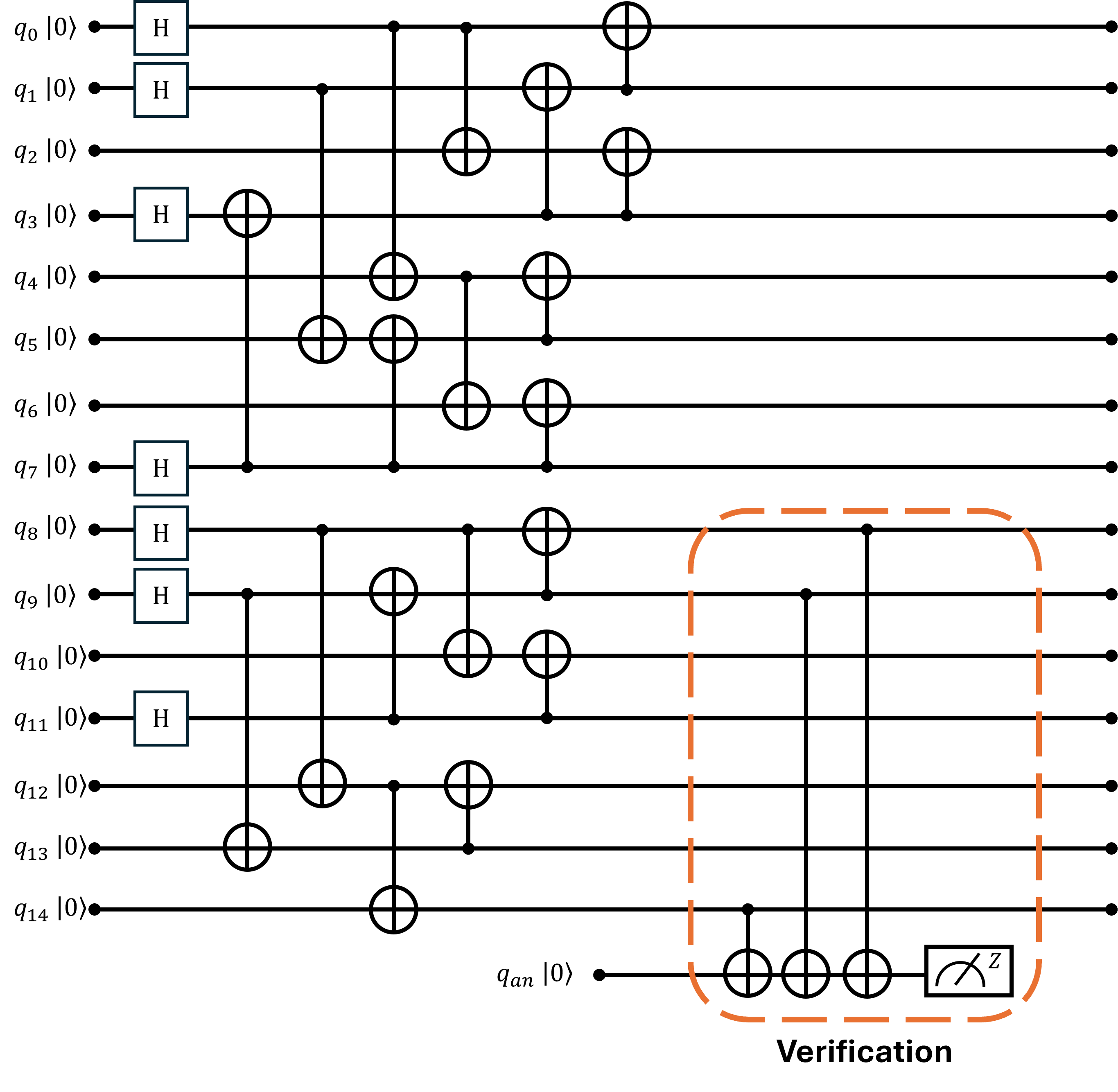}
    \caption{Circuit for initialization of $\lvert0\rangle_L$ state on $\mathcal{Q}^{\rm Sym}$. The non-fault-tolerant encoding circuit is optimized by~\cite{mqt}. The verification procedure corresponding to measure out the logical $\mathbf{Z}_L=\mathbf{Z_9Z_{10}Z_{15}}$ of symmetric code, and detect all uncorrectable high-weight Pauli-$X$ errors. In total 22 CNOT gates are used, and for $\lvert+\rangle_L$ state, one can simply add 15 transversal Hadamard gates in the end, since $\mathcal{Q}^{\rm Sym}$ has Hadamard gate transversality.}
    \label{fig:SymCodeStateP}
\end{figure}

\end{document}